\newtheoremstyle{propstyle} 
    {2mm}                    
    {1mm}                    
    {\itshape}                   
    {}                           
    {\scshape}                   
    {.}                          
    {.5em}                       
    {}  
\theoremstyle{propstyle}
\newtheorem{proposition}{Proposition}
\theoremstyle{propstyle}
\newtheorem{definition}{Definition}
\theoremstyle{propstyle}
\newtheorem{lemma}{Lemma}
\theoremstyle{propstyle}
\theoremstyle{propstyle}
\newcounter{algorithm}
\renewcommand{\paragraph}{%
  \@startsection{paragraph}{4}%
  {\z@}{2ex \@plus 1ex \@minus .2ex}{-1em}%
  {\normalfont\normalsize\bfseries}%
}
\DeclareMathAlphabet\mathbfcal{OMS}{cmsy}{b}{n}
\newcommand{\bs}{\mathbf{s}}
\newcommand{\bx}{\mathbf{x}}
\newcommand{\by}{\mathbf{y}}
\newcommand{\bg}{\mathbf{g}}
\newcommand{\bz}{\mathbf{z}}
\newcommand{\bA}{\mathbf{A}}
\newcommand{\bI}{\mathbf{I}}
\newcommand{\bD}{\mathbf{D}}
\newcommand{\bU}{\mathbf{U}}
\newcommand{\bK}{\mathbf{K}}
\newcommand{\bM}{\mathbf{M}}
\newcommand{\bfzero}{\mathbf{0}}
\newcommand{\bfmu}{\bm{\mu}}
\newcommand{\bftheta}{\bm{\theta}}
\newcommand{\bfepsilon}{\bm{\epsilon}}
\newcommand{\bfSigma}{\bm{\Sigma}}
\DeclareMathOperator*{\cov}{cov}
\newcommand{\diag}{diag}
\newcommand{\GP}{\mathcal{GP}}
\newcommand{\order}{\mathcal{O}}
\newcommand{\normal}{\mathcal{N}}
\newcommand{\inputspace}{\mathcal{X}}
\DeclareMathOperator*{\argmax}{arg\,max}
\title{Correlation-based sparse inverse Cholesky factorization for fast Gaussian-process inference}
\author{Myeongjong Kang\thanks{Department of Statistics, Texas A\&M University} \and Matthias Katzfuss\footnotemark[1] \thanks{Corresponding author: \texttt{katzfuss@gmail.com}}}
\date{}
\begin{document}

\maketitle

\begin{abstract}
\noindent
Gaussian processes are widely used as priors for unknown functions in statistics and machine learning. To achieve computationally feasible inference for large datasets, a popular approach is the Vecchia approximation, which is an ordered conditional approximation of the data vector that implies a sparse Cholesky factor of the precision matrix. The ordering and sparsity pattern are typically determined based on Euclidean distance of the inputs or locations corresponding to the data points. Here, we propose instead to use a correlation-based distance metric, which implicitly applies the Vecchia approximation in a suitable transformed input space. The correlation-based algorithm can be carried out in quasilinear time in the size of the dataset, and so it can be applied even for iterative inference on unknown parameters in the correlation structure. The correlation-based approach has two advantages for complex settings: It can result in more accurate approximations, and it offers a simple, automatic strategy that can be applied to any covariance, even when Euclidean distance is not applicable. We demonstrate these advantages in several settings, including anisotropic, nonstationary, multivariate, and spatio-temporal processes. We also illustrate our method on multivariate spatio-temporal temperature fields produced by a regional climate model.
\end{abstract}

{\small\noindent\textbf{Keywords:} covariance approximation; maximum-minimum-distance ordering; nearest neighbors; spatial statistics; Vecchia approximation}


\section{Introduction \label{sec:intro}}

Gaussian processes (GPs) are used for modeling functions in a variety of settings, including geostatistics \citep[e.g.,][]{Banerjee2004,Cressie2011}, nonparametric regression and machine learning \citep[e.g.,][]{Rasmussen2006}, the analysis of computer experiments \citep[e.g.,][]{Sacks1989, Kennedy2001, gu2018scaled}, and optimization \citep{Jones1998}. GPs can also be used to represent wide neural networks \citep{Yang2019}.
However, direct application of GPs requires working with and decomposing the data covariance matrix at a cost that is cubic in the data size, which is often too expensive for today's large datasets.

Many approaches have been proposed to scale GP inference to large numbers of observations \citep[see][for recent reviews]{Heaton2017,Liu2018}.
Among these, probably the most promising class of approximations in spatial statistics consists of the Vecchia approximation \citep{Vecchia1988} and its extensions \citep[e.g.,][]{Stein2004,Datta2016,Guinness2016a,Sun2016,Katzfuss2017a,Katzfuss2018,Schafer2020}. As detailed in \citet{Katzfuss2017a}, the class also contains many other popular GP approximations as special cases \citep[e.g.,][]{Snelson2007,Finley2009,Sang2011a,Eidsvik2012a,Datta2016,Katzfuss2015,Katzfuss2017b} and it is closely related to composite-likelihood methods \citep[e.g.,][]{Varin2008,Eidsvik2012}. 
Vecchia approximations obtain a sparse Cholesky factor of the precision matrix via an ordered conditional approximation, based on removing conditioning variables in a factorization of the joint density of the GP observations into a product of conditional distributions.

The performance of a Vecchia approximation depends heavily on the choice of ordering of the variables and the choice of conditioning sets (which determines the Cholesky sparsity pattern). So far, Vecchia approximations have been mostly applied in geospatial applications featuring isotropic GPs in low-dimensional input spaces, for which the ordering and conditioning can be carried out based on the inputs or locations. Specifically, the observations can be ordered using a maximum-minimum-distance algorithm, and the sparsity is determined by nearest-neighbor conditioning \citep{Guinness2016a}. Both ordering and conditioning are typically carried out based on Euclidean distance of the corresponding inputs. We call this existing approach Euclidean-based Vecchia (EVecchia).
EVecchia has also been used for nonisotropic settings, including for nonstationary \citep{Konomi2019,Risser2020}, multivariate \citep{zhang2021high}, space-time \citep[]{white2019nonseparable}, and periodic GPs \citep[][Supplement A.9]{Datta2016}.

Here, we propose Vecchia approximations whose ordering and conditioning employ a correlation-based distance metric; we refer to this approach as correlation-based Vecchia or CVecchia. Correlation-based conditioning (but not ordering) was already mentioned in the early Vecchia papers \citep{Vecchia1988,Jones1997,Stein2004}, but it was dismissed and not thoroughly explored, mainly due to concerns about high computational cost and instability. In contrast, we argue that CVecchia can improve approximation accuracy, and it can be carried out efficiently even in the presence of unknown parameters, allowing both frequentist and Bayesian parameter inference. 
\citet{yu2017geometry} proposed a related correlation-based idea in the context of hierarchical low-rank compression (but not factorization) of a positive-definite matrix.
So far, all previous Vecchia approaches have based the ordering on spatial or temporal locations, without considering the covariance function to be approximated. Conditioning sets have also been selected based on the locations; one exception is the dynamic spatio-temporal nearest-neighbor GP \citep{Datta2016a}, whose adaptive neighbor-selection scheme defines a space-time distance as a function of the spatio-temporal covariance.

EVecchia and CVecchia are equivalent for strictly decreasing isotropic correlation functions \citep{Jones1997,Stein2004}, but CVecchia has two advantages for more complex situations, such as anisotropic, nonstationary, multivariate, and spatio-temporal processes: It can provide much higher accuracy, and it offers a simple, automatic strategy even when Euclidean distance is not applicable. Thus, CVecchia greatly expands the applicability of the Vecchia approach; in fact, CVecchia can be applied to any covariance matrix whose individual entries can be obtained or computed quickly, as the approximation only relies on evaluating or accessing a near-linear number of entries. CVecchia implicitly applies a Vecchia approximation in a suitable transformed input domain, in which the GP of interest is isotropic and Euclidean distance is meaningful.

The remainder of this document is organized as follows. In Section~\ref{sec:stanV}, we review Vecchia approximations from a perspective that enables our extensions. In Section~\ref{sec:corrV}, we introduce correlation-based Vecchia and discuss its properties. Section~\ref{sec:simul} provides numerical comparisons. In Section~\ref{sec:realdat}, we illustrate the performance of our method using output from a regional climate model. Section~\ref{sec:conc} concludes and discusses future work. Appendix~\ref{app:proofs} contains proofs. 
The code for running our method and reproducing figures can be found at \url{https://github.com/katzfuss-group/correlationVecchia}.


\section{Review of Euclidean-based Vecchia \label{sec:stanV}}

\subsection{The Vecchia approximation \label{sec:vecchia}}

Consider a centered Gaussian random vector $\by = \left( y_1 , y_2 , \ldots , y_n \right)^{\top} \sim \normal_n(\bfzero, \bK)$,
where $\bK$ is an $n \times n$ positive-definite covariance matrix.
For example, $\by$ may be a vector of observations of a GP.
Evaluating the Gaussian density $p(\by)$, which typically relies on Cholesky decomposition of $\bK$, generally requires $\order (n^3)$ computing time and $\order (n^2)$ memory; this is often too expensive for large $n \gg 10^3$.

A promising approach to reduce the computational effort is the Vecchia approximation. Motivated by the exact factorization $p(\by) = \prod_{i=1}^n p(y_i | \by_{1:i-1})$ with $\by_{1:0} \colonequals \emptyset$, the Vecchia approximation is given by
\begin{equation}
\label{eq:vecchia}
\hat p(\by) = \prod_{i=1}^n p(y_i | \by_{c(i)}) = \normal_n(\bfzero,\hat\bK),
\end{equation}
where $c(1) = \emptyset$ and $c(i) \subset \lbrace 1, \ldots, i-1 \rbrace$ for $i = 2, \ldots, n$. We assume that all conditioning sets are at most of size $m$, $|c(i)| = \min( m, i-1)$, for some integer $m\ll n$. The approximate covariance matrix $\hat{\bK}$ has a sparse inverse Cholesky factor: $\hat{\bK}^{-1} = \bU\bU^{\top}$, where $\bU$ is a sparse upper triangular matrix with at most $m$ off-diagonal nonzeros per column, given by $\bU_{\tilde c(i),i} = (\bK_{\tilde c(i),\tilde c(i)})^{-1}\mathbf{e}_1/\big(\mathbf{e}_1^\top(\bK_{\tilde c(i),\tilde c(i)})^{-1}\mathbf{e}_1\big)^{1/2}$, where $\tilde c(i) = \{i\} \cup c(i)$ and $\mathbf{e}_1$ is a vector of length $m+1$ with the first entry equal to one and all other entries equal to zero \citep{Schafer2020}.
Each of the $n$ columns of $\bU$ can be computed in $\mathcal{O}(m^3)$ time, completely in parallel.
Further, the $\bU$ implied by the Vecchia approximation is the optimal sparse inverse Cholesky factor of $\bK$ in terms of Kullback-Leibler (KL) divergence between $\mathcal{N}(\mathbf{0}, \bK)$ and $\mathcal{N}(\mathbf{0},(\bU\bU^{\top})^{-1})$ for the sparsity pattern for $\bU$ implied by the $c(i)$ as above \citep{Schafer2020}.

The size of conditioning sets, $m$, acts as a tuning parameter that trades off sparsity and computational speed against approximation accuracy.
In particular, if $m = 0$, then the Vecchia approximation assumes diagonal $\hat{\bK}$ and yields independent $y_1,\ldots,y_n$. If $c(i) = \{1,\ldots,i-1\}$ and hence $m = n-1$, then the Vecchia approximation is exact. In general, adding indices to the conditioning sets is guaranteed to result in lower or equal KL divergence \citep{Guinness2016a}. In many settings, high accuracy can be achieved even using relatively small $m$. In practice, often $m<100$ is chosen with respect to available computational resources \citep[see, e.g., the guidelines and discussion in][]{Katzfuss2017a}.

\subsection{Ordering and conditioning \label{sec:ordcond}}

For given $m$, the accuracy of a Vecchia approximation depends on the choice of ordering of the variables  $y_1,\ldots,y_n$ in $\by$, and on the choice of conditioning sets $c(m+2), \ldots , c(n)$. 
Arguably the preferred approach in this setting is to combine a maximum-minimum-distance ordering \citep[MM;][]{Guinness2016a} and nearest-neighbor conditioning (NN), as illustrated in Figure \ref{fig:mmnn}.

Specifically, for MM ordering, the first index $i_1$ can be selected arbitrarily (e.g., $i_1=1$), and then the subsequent indices are selected for $k = 2, \ldots , n$ as
\begin{equation}
\label{eq:mm}
i_k = \argmax_{i \, \in \, \mathcal{I} \setminus \mathcal{I}_{1:k-1}} \,\, \min_{j \, \in \, \mathcal{I}_{1:k-1}} \tau(i,j) ,
\end{equation}
where $\mathcal{I} = \{1,\ldots,n\}$ and $\mathcal{I}_{1:k-1} = \{i_1 , \ldots , i_{k-1}\}$, using a predefined distance measure $\tau$ between the entries of $\by$. For simplicity of notation, assume henceforth and in \eqref{eq:vecchia} that $\by=(y_1,\ldots,y_n)$ follows MM ordering (i.e., $y_k = y_{i_k}$).

For NN conditioning, $y_{i}$ conditions on the $\min ( m, {i}-1)$ previously ordered variables $\by_{c(i)}$ that are nearest to $y_{i}$ in terms of $\tau$. Specifically, for $1 < i \leq m+1$, we have $c(i) =\{1,\ldots,i-1\}$; for $i>m+1$, we have 
\begin{equation}
\label{eq:nn}
   c(i) \subset \{1,\ldots,i-1\} \text{ of size } |c(i)|=m, \text{ s.t.\ } \tau(i,j) \leq \tau(i,k) \,  \forall j \in c(i), \; k \in \{1,\ldots,i-1\}\setminus c(i). 
\end{equation}
We also employ an algorithm that groups similar conditioning sets \citep{Guinness2016a} to lessen overall computational cost of Vecchia approximation.
Although we only consider conditioning sets consisting of the $m$ nearest neighbors here, our framework also allows the use of other neighbor-selection strategies. For instance, \citet{Schafer2020} uses conditioning sets consisting of all variables within a ball of a certain radius, which decreases systematically with the MM-ordering index $i$; however, we carried out exploratory numerical studies, in which this radius-based approach was often significantly less accurate than NN conditioning, especially for irregularly spaced inputs.

As we can see, specifying a Vecchia approximation requires a choice of distance $\tau(i,j)$ between pairs $\left( y_i, y_j \right)$ to determine MM and NN.
So far, the Vecchia approximation has been applied in the setting where $\by$ is a realization of a GP $y(\cdot) \sim \GP(0,K)$ at inputs $\bx_1,\ldots,\bx_n$, so that $y_i = y(\bx_i)$ and $\bK_{ij} = K(\bx_i,\bx_j)$. Then, the ordering and conditioning for $y_1,\ldots,y_n$ are typically based on the Euclidean distance between corresponding inputs:
\begin{equation}
\tau_E(i,j) = \| \bx_i - \bx_j \|,
\end{equation}
which we call Euclidean-based maximum-minimum-distance ordering (E-MM) and Euclidean-based nearest neighbor conditioning (E-NN), respectively. E-MM and E-NN are illustrated in Figure~\ref{fig:mmnn}. We refer to a Vecchia approximation based on this approach as EVecchia (which is then only a function of $m$).
EVecchia has been shown to outperform Vecchia approximations based on other ordering and conditioning schemes for GPs in low-dimensional input spaces \citep[e.g.,][]{Guinness2016a,Katzfuss2017a,Schafer2020}.

\begin{figure}
    \centering
    \includegraphics[width=.95\linewidth]{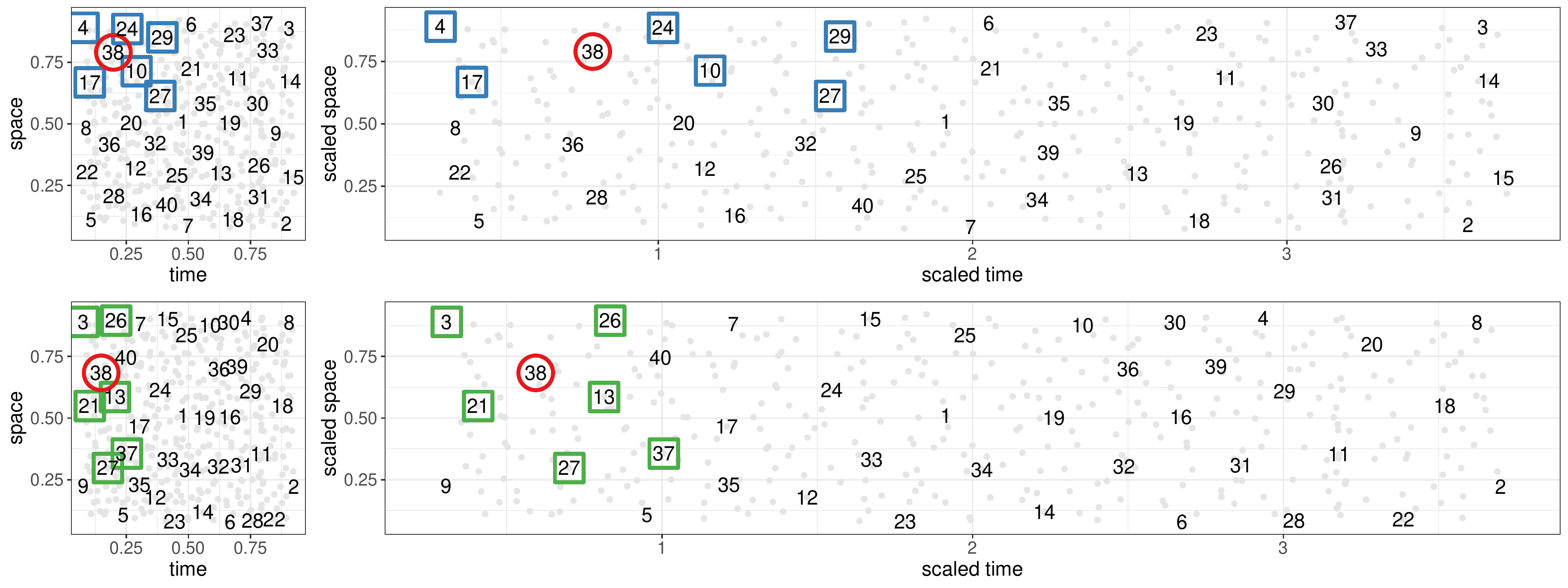}
    \caption{Euclidean (top) and correlation-based (bottom) maximum-minimum-distance ordering (MM) and nearest-neighbor conditioning (NN) for $n=400$ spatio-temporal inputs (grey points), assuming a spatio-temporal covariance \eqref{eq:spacetime} with the ratio of temporal to spatial range, $r_t/r_s=4$. The first 40 ordered inputs are numbered, and boxes denote the nearest $m=6$ previously ordered neighbors $c(i)$ of the $i=38$th input (red circle), in the unit-square input space $[0,1]^2$ (left panels) and the transformed input space $[0,4] \times [0,1]$ (right). Correlation-based MM and NN can be thought of as Euclidean MM and NN in the transformed input space (bottom right). This figure is inspired by Figure 1 in \cite{Katzfuss2020}.}
    \label{fig:mmnn}
\end{figure}

\section{Correlation-based Vecchia approximation \label{sec:corrV}}

\subsection{Definition and overview \label{sec:corrdef}}

We propose a correlation-based Vecchia (CVecchia) approximation of $\by \sim \normal_n(\bfzero,\bK)$. CVecchia consists of a Vecchia approximation \eqref{eq:vecchia} for which the MM ordering \eqref{eq:mm} and NN conditioning \eqref{eq:nn} are carried out using a correlation-based distance,
\begin{equation}
\label{eq:corrdist}
    \tau_C(i,j) = (1-|\rho_{ij}|)^{1/2}, \qquad \text{where} \;  \rho_{ij} = \bK_{ij}/(\bK_{ii}\bK_{jj})^{1/2}, \qquad i,j \in \mathcal{I} = \{1,\ldots,n\}
\end{equation}
which we will call C-MM and C-NN, respectively. 

As we will explore in more detail below, CVecchia is equivalent to EVecchia for many popular isotropic kernels; CVecchia can be more accurate than EVecchia for nonisotropic kernels (e.g., anisotropic, nonstationary, spatio-temporal); and CVecchia is applicable even when EVecchia is not (e.g., multivariate GPs, GPs based on discrete or non-Euclidean inputs such as in text analysis or natural language processing).

Provided that $\bK$ is positive-definite, $\tau_C: \mathcal{I} \times \mathcal{I} \rightarrow [0,1]$ in \eqref{eq:corrdist} is a proper distance metric \citep{van2012metric} and, in particular, satisfies the triangle inequality. 
This allows us to rapidly compute C-MM and C-NN with an adaptation of the algorithm in \citet{Schafer2020} in quasilinear time in $n$, assuming that each entry of $\bK$ can be computed in $\order(1)$ time; in practice, this computational cost is often small relative to that of the core Vecchia approximation in \eqref{eq:vecchia}, and so the computational complexity of CVecchia can still be thought of as $\order(nm^3)$, same as for EVecchia. Among other things, this means that CVecchia is useful even when $\bK$ depends on unknown parameters that must be inferred.

As MM and NN only depend on the ranking of distances (not the distances themselves), 
other correlation-based distance metrics that are ordinally equivalent \citep[e.g.,][]{van2012metric} to $\tau_C$ in \eqref{eq:corrdist} will result in equivalent CVecchia approximations.

By definition of the correlation-based distance in \eqref{eq:corrdist}, C-MM and C-NN ignore the marginal variances of the variables $y_1,\ldots,y_n$. Thus, one may ask whether, for example, a better conditioning set $c(i)$ could be obtained based on a distance metric that takes into account highly varying marginal variances. However, this is not the case. To see this, note that we have $\text{KL}(p(\by) \vert \hat p(\by)) = \sum_{i}\log(var(y_i | \by_{c(i)})/var(y_i | \by_{1:i-1}))/2$ \citep[e.g.,][]{Guinness2016a}, where $var(y_i|a y_j)$ is the same for any $a \neq 0$ and so $var(y_i | \by_{c(i)})$ does not depend on the marginal variances of the conditioning variables.

\subsection{Properties of CVecchia in the special case of reducible GPs \label{sec:mathprop}}

CVecchia is equivalent to EVecchia if $\bK$ is the covariance matrix of a realization of an isotropic GP with strictly decreasing positive covariance function: $\bK_{ij} = K(\bx_i,\bx_j) = \sigma^2 \rho(\tau_E(i,j))$, where $\rho: \mathbb{R}^+_0 \rightarrow [0,1]$ is strictly decreasing; examples include Mat\'ern and power-exponential covariance functions.
Taken one step further, this finding suggests that CVecchia can be interpreted as EVecchia on a transformed input space in the special case of \emph{reducible} GPs, which we define as follows:
\begin{definition}[$q$-reducibility]\label{def:reducibility}
A zero-mean Gaussian process $y(\cdot)$ on $\mathbb{R}^d$ with $d \ge 1$ is $q$-reducible if there exists a $\psi: \mathbb{R}^d \rightarrow \mathbb{R}^q$ such that $y\big( \psi^{-1}(\cdot) \big)$ is a Gaussian process with a strictly decreasing isotropic covariance function. In particular, $y$ is bijectively reducible if $q=d$.
\end{definition}
Definition \ref{def:reducibility} is broad enough to include many GPs of interest. For some covariance functions, the deformation function $\psi$ can be easily identified, including (geometrically) anisotropic GPs, automatic relevance determination, and latent-dimension (i.e., dimension-expansion) approaches to multivariate and spatio-temporal GPs. Also, some popular nonstationary GPs are explicitly constructed in the way we define the reducibility \citep[e.g.,][]{perrin1999modelling, Schmidt2003, vu2020modeling}.

A major advantage of CVecchia is that it is not required to identify the deformation $\psi$ explicitly, but that it automatically carries out the approximation in a transformed space in which Euclidean distance is meaningful:
\begin{proposition}\label{prop:equivalence}
Assume that a zero-mean Gaussian process $y(\cdot)$ is $q$-reducible with respect to $\psi$. If the first index is chosen to be the same for both C-MM and E-MM, then CVecchia of $y(\cdot)$ at inputs $\bx_1 , \ldots , \bx_n$ is identical to EVecchia of $y\big( \psi^{-1}(\cdot) \big)$ at the transformed inputs $\psi(\bx_1), \ldots ,\psi(\bx_n)$.
\end{proposition}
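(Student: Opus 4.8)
The plan is to exploit the fact that a Vecchia approximation is completely determined by three ingredients: the ordering of the variables, the conditioning sets $c(i)$, and the covariance matrix used to build the sparse factor $\bU$. Once the ordering and the sets $c(i)$ are fixed, the factor $\bU_{\tilde c(i),i} = (\bK_{\tilde c(i),\tilde c(i)})^{-1}\mathbf{e}_1/(\mathbf{e}_1^\top(\bK_{\tilde c(i),\tilde c(i)})^{-1}\mathbf{e}_1)^{1/2}$ depends on nothing but the entries of $\bK$. The first step is therefore to observe that the two sides of the claimed identity share the same covariance matrix: writing $z = y \circ \psi^{-1}$, we have $z(\psi(\bx_i)) = y(\bx_i)$, so $\cov(z(\psi(\bx_i)), z(\psi(\bx_j))) = \bK_{ij}$ for every $i,j$. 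Consequently, if I can show that C-MM/C-NN applied to $y$ at $\bx_1,\ldots,\bx_n$ produce exactly the same ordering and conditioning sets as E-MM/E-NN applied to $z$ at $\psi(\bx_1),\ldots,\psi(\bx_n)$, then the resulting factors $\bU$, and hence the two approximate densities, must coincide.

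This reduces the entire proposition to a statement about distances, and the key step is to establish that on the index set $\mathcal{I}$ the correlation-based distance $\tau_C(i,j)$ for $y$ is ordinally equivalent to the Euclidean distance $\|\psi(\bx_i) - \psi(\bx_j)\|$ in the transformed space. Because $z$ is isotropic with strictly decreasing $\rho:\mathbb{R}^+_0 \to [0,1]$, its marginal variance is constant, equal to $\sigma^2\rho(0)$; consequently $\bK_{ii} = \sigma^2\rho(0)$ for all $i$ and $\rho_{ij} = \bK_{ij}/\sqrt{\bK_{ii}\bK_{jj}} = \rho(\|\psi(\bx_i) - \psi(\bx_j)\|)/\rho(0) \in [0,1]$. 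Writing $d_{ij} = \|\psi(\bx_i) - \psi(\bx_j)\|$, this yields the clean identity $\tau_C(i,j) = (1 - \rho(d_{ij})/\rho(0))^{1/2}$. Since $\rho$ is strictly decreasing and $\rho(0)>0$, the map $t \mapsto (1 - \rho(t)/\rho(0))^{1/2}$ is strictly increasing on $[0,\infty)$, so $\tau_C(i,j)$ is a strictly increasing function of $d_{ij}$, which is precisely ordinal equivalence of the two distances.

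Finally, I would invoke the fact noted in Section~\ref{sec:corrdef} that both the MM ordering \eqref{eq:mm} and the NN conditioning \eqref{eq:nn} depend on the distances only through their ranking. Concretely, a strictly increasing transformation commutes with the inner $\min$ and the outer $\argmax$ in \eqref{eq:mm}, so, once the first index is fixed to agree (as assumed), an induction on $k$ shows that the two MM orderings are produced index-by-index in lockstep; likewise the $m$ nearest neighbors selected in \eqref{eq:nn} are identical because the defining ``$\leq$'' comparisons are preserved. Combined with the shared covariance matrix from the first paragraph, this gives the claimed equality of the two approximations. The delicate point I would treat most carefully is the handling of ties in the $\argmax$ and in the nearest-neighbor comparisons: a strictly increasing transformation sends ties to ties, so the two procedures face exactly the same ambiguities, but to make ``identical'' literally true one must assume a common deterministic tie-breaking rule; away from such ties the conclusion is an immediate consequence of ordinal equivalence.
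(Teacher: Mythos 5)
Your proof is correct and follows essentially the same route as the paper's: both establish that $\tau_C(i,j)$ is a strictly increasing function of the transformed Euclidean distance $\|\psi(\bx_i)-\psi(\bx_j)\|$ (the paper isolates the covariance identity $K(\bx,\bx') = K_0(\|\psi(\bx)-\psi(\bx')\|)$ as a separate lemma) and then invoke the fact that MM and NN depend on distances only through their ranking. Your additional remarks --- that the factor $\bU$ depends only on the shared covariance entries, and that ties must be broken consistently for ``identical'' to hold literally --- are correct refinements that the paper leaves implicit.
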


The dimension $q$ in Proposition \ref{prop:equivalence} is important, in that EVecchia approximations become more challenging as the input dimension increases. There have been studies on necessary and sufficient conditions for reducibility and how large $q$ must be \citep[e.g.,][]{perrin2000reducing,curriero2006use}, and sufficient conditions for related concepts have been identified \citep[e.g.,][]{porcu2010non,perrin2003nonstationarity,perrin2007can}. In some settings, theoretical guarantees depending on $q$ on the performance of CVecchia for reducible GPs can be provided using recent results for isotropic GPs \citep{Schafer2020}. For example, if a process is $q$-reducible to an isotropic GP whose kernel is the Green's function of an elliptic PDE (which is equivalent to a Mat\'ern covariance up to edge effects), then CVecchia can provide an $\epsilon$-accurate approximation in $\order\left( n \log^{3q}(n/\epsilon) \right)$ time.

While Proposition~\ref{prop:equivalence} provides an explanation for why CVecchia can produce adaptive approximations to some popular nonisotropic GPs, this property deserves further investigation for its relationship to Euclidean embeddings \citep{witsenhausen1986minimum,matousek2013lectures, maehara2013euclidean}. It is well-known that an exact representation of a given metric space into Euclidean space is not easy to find and that is why approximate embeddings have been studied. For instance, the Johnson-Lindenstrauss flattening lemma \citep{johnson1984extensions} states the existence of low-distortion (no more than a factor of $1 \pm \epsilon$) Euclidean embedding of a given finite metric space to $q$-dimensional Euclidean space where $q \ge \order \left( \log (n) / \epsilon \right)$. This may provide another way to carry out performance evaluation of CVecchia approximations.

\subsection{Estimation of parameters \label{sec:estim}}

So far, we have assumed a fixed $\bK$ and $p(\by) = \normal_n(\by|\bfzero,\bK)$, but in practice $\bK = \bK_{\bftheta}$ and hence $p_{\bftheta}$ often depend on unknown parameters $\bftheta$. Our CVecchia approximation $\hat p(\by) = \prod_{i=1}^n p(y_i | \by_{c(i)})$ depends on $\bftheta$ both via $p_{\bftheta}$ and via the correlation distance $\tau_C^{\bftheta}$ in \eqref{eq:corrdist} used to determine the MM ordering of $y_1,\ldots,y_n$ and the NNs in the $c(i)$. To emphasize this, we will sometimes use $\hat p_{\bftheta_1}^{\bftheta_2}(\by)$ to denote a CVecchia approximation of $p_{\bftheta_1}$ based on $\tau_C^{\bftheta_2}$.

For frequentist inference, \citet{Guinness2019} proposed to find the maximum likelihood estimator of $\bftheta$ by optimizing the Vecchia loglikelihood via Fisher scoring. Given that
\begin{equation}
\label{eq:logvecchia}
\textstyle
\log \hat p(\by) = \sum_{i=1}^n \log p(y_i | \by_{c(i)}) = \sum_{i=1}^n \left( \log p(\by_{\tilde{c}(i)}) - \log p(\by_{c(i)}) \right),
\end{equation}
the score $\bg^{(k)}$ and the Fisher information $\bM^{(k)}$ of $\hat p(\by)$ at the $k$th iteration of the Fisher-scoring algorithm can be computed by addition and subtraction of the score and Fisher information of each of the $2n$ normal distributions of dimension at most $m+1$ on the right-hand side of \eqref{eq:logvecchia}. The parameter vector is then updated as $\bftheta^{(k+1)} = \bftheta^{(k)} + (\bM^{(k)})^{-1} \bg^{(k)}$.

For CVecchia, we propose to use a modified Fisher-scoring algorithm, where we now compute $\bg^{(k)}= \frac{\partial}{\partial \bftheta}\log \hat p^{\tilde\bftheta^{(k)}}_{\bftheta}(\by) \vert_{\bftheta = \bftheta^{(k)}}$ and $\bM^{(k)} = - \mathbb{E}  \frac{\partial^2}{\partial \bftheta^2}\log \hat p^{\tilde\bftheta^{(k)}}_{\bftheta}(\by) \vert_{\bftheta = \bftheta^{(k)}}$ with fixed ordering and conditioning based on $\tau_C^{\tilde\bftheta^{(k)}}$. In other words, when computing derivatives of the CVecchia loglikelihood for the Fisher-scoring updates, the dependence of the ordering and conditioning on $\bftheta$ is ignored. Instead, the ordering and conditioning are updated based on $\tilde\bftheta^{(k)} = \bftheta^{(k)}$ after certain iterations $k \in \mathcal{G}$, and $\tilde\bftheta^{(k)}=\tilde\bftheta^{(k-1)}$ otherwise. For simplicity, we can update the ordering and conditioning at the end of each iteration, $\mathcal{G} = \{1,2,3,4,\ldots\}$. Alternatively, the computational cost can be reduced by setting $\mathcal{G} = \{1,2,4,8\ldots\}$ and thus skipping this update for exponentially increasing numbers of iterations, exploiting that the parameter values tend to change less and less with increasing iteration numbers. In either case, repeatedly updating the ordering and conditioning over the course of the Fisher-scoring algorithm did not introduce convergence problems in our numerical experiments.

As the Vecchia approximation implies a valid density $\hat p(\by) = \normal_n(\by|\bfzero,\hat\bK)$, it is also possible to carry out Bayesian inference on $\bftheta$, assuming a prior $p(\bftheta)$ has been specified. However, the dependence of C-MM and C-NN on $\bftheta$ again presents a challenge. In the context of a spatio-temporal covariance, \citet{Datta2016a} essentially proposed to approximate the posterior as $\hat p(\bftheta|\by) \propto p(\bftheta) \hat p^{\bftheta}_{\bftheta}(\by)$ based on C-NN, meaning that the conditioning sets $c(i)$ are recomputed for every $\bftheta$ at which the posterior is evaluated. However, in our exploratory studies, we found this approach to lead to unstable and sinuous approximate posteriors. Instead, we propose to first obtain a maximum likelihood or maximum a posteriori estimate $\hat\bftheta$ using Fisher scoring, as above. Then, we approximate the posterior as $\hat p(\bftheta|\by) \propto p(\bftheta) \hat p^{\hat\bftheta}_{\bftheta}(\by)$, with fixed correlation distance $\tau^{\hat\bftheta}_C$ and hence fixed C-MM and C-NN based on $\bftheta = \hat\bftheta$. This approach leads to smooth posteriors, as illustrated in Section \ref{sec:sim_noise}.

\subsection{Prediction \label{sec:prdtn}}

Our method can be used for accurate and efficient prediction of an unobserved vector $\by^{*} = (y^{*}_{1} , \ldots , y^{*}_{n^{*}})$ with $\big( \by^{\top} , \by^{* \top} \big)^{\top} \sim \normal_{n + n^{*}} \left(\bfzero, \bK^{all} \right)$. For prediction and uncertainty analysis, the goal is to obtain the joint posterior predictive distribution $p(\by^{*} | \by)$.
Following \citet{Katzfuss2018}, we apply a Vecchia approximation to $\big( \by^{\top} , \by^{*}{}^\top \big)^{\top}$ with the entries of $\by^{*}$ ordered after those of $\by$ to obtain a CVecchia approximation of the posterior predictive distribution,
\begin{equation}
\label{eq:predvecchia}
\hat p(\by^{*} | \by) = \prod_{i=1}^{n^{*}} p(y^{*}_i | \by_{c_o(i)} , \by^{*}_{c_u(i)}) = \normal_{n^{*}} (\bfmu^* , \hat{\bK}^*),
\end{equation}
where $y^{*}_{1} , \ldots , y^{*}_{n^{*}}$ are assumed to follow a restricted C-MM ordering, which is obtained from a C-MM ordering of all (observed and unobserved) variables under the restriction of having the observed variables be ordered first (in which case the ordering of the unobserved variables takes the distances to observed variables into account).
As recommended in \citet{Katzfuss2018}, we allow the unobserved variables to condition on both observed and (previously ordered) unobserved variables. Specifically, 
$y^{*}_i$ conditions on the nearest (in terms of correlation-based distance with respect to $\bK^{all}$) $m$ variables among $y_1,\ldots,y_n,y_1^*,\ldots,y_{i-1}^*$. For notational convenience, in \eqref{eq:predvecchia} the resulting conditioning set is split into indices $c_o(i)$ corresponding to observed variables and $c_u(i)$ corresponding to unobserved variables; either $c_o(i)$ or $c_u(i)$ can be an empty set for any $i$. 
Each of the conditional distributions in the product in \eqref{eq:predvecchia} can be computed in $\order\left( m^3 \right)$ time, resulting in fast prediction or joint simulation even for large $n$ and $n^{*}$.

In practice, $\bK^{all}$ will typically depend on unknown parameters $\bftheta$. Predictions can then be based on a frequentist estimate of $\bftheta$ or based on samples from the Bayesian posterior of $\bftheta$, which can be obtained using the observed data $\by$ as described in Section \ref{sec:estim}. Then, given a frequentist estimate $\hat\bftheta$, the posterior predictive distribution is obtained as $\hat p^{\hat\bftheta}_{\hat\bftheta}(\by^{*} | \by)$ using similar notation as in Section \ref{sec:estim}. Given samples $\bftheta^{(1)},\ldots,\bftheta^{(L)}$ from the posterior, we can account for posterior uncertainty in $\bftheta$ and obtain an averaged posterior predictive distribution 
$\hat p(\by^{*} | \by)= (1/L) \sum_{l=1}^L \hat p_{\bftheta^{(l)}}^{\hat\bftheta}(\by^{*} | \by)$, where $\hat\bftheta$ is again a
maximum likelihood or maximum a posteriori estimate.

\subsection{Noise \label{sec:noise}}

The methods discussed so far are most appropriate if $\by$ is observed without noise. However, data in many application areas are typically modeled as a GP with additive noise. Suppose now that we observe $\bz = \by + \bfepsilon$ with $\bfepsilon = (\epsilon_1 , \ldots , \epsilon_n)^{\top} \sim \normal_n(\bfzero,\bD)$, where $\bD$ is diagonal.

A straightforward way of extending our methods to this noisy setting is to apply the same CVecchia approach to the covariance matrix of $\bz$, which is $\bfSigma = \bK + \bD$. However, in this approach the noise terms weaken the screening effect and hence an accurate CVecchia approximation will often require a larger $m$ than in the noise-free case. Interestingly, if the signal and noise variances are both constant (i.e., $\bK_{ii} = \bK_{jj}$ and $\bD_{ii} = \bD_{jj}$ for all $i,j$), then C-MM and C-NN do not depend on the noise variance (even if it is zero). This can be seen by noting that $\tau_C (i, j) \le \tau_C (i, k)$ if and only if $\tau^{+D}_C (i, j) \le \tau^{+D}_C (i, k)$, where $\tau^{+D}_C(i,j) = (1-|\rho^{+D}_{ij}|)^{1/2}$ with $\rho^{+D}_{ij} = \bfSigma_{ij}/(\bfSigma_{ii}\bfSigma_{jj})^{1/2}$ for $i,j \in \mathcal{I}$.
For varying noise variances, high-noise observations move farther away from other observations in terms of correlation distance, and so they are less likely to be included in conditioning sets; this makes intuitive sense, in that their high noise means that they contain less information about $\by$.

An alternative way of extending our methods to the noisy setting is to apply CVecchia to the (now latent) noise-free variables $\by$ as before and then add noise. In other words, we set $\hat\bfSigma = \hat\bK + \bD$, where $\hat\bK$ is obtained using CVecchia as in previous sections. While this is conceptually simple, inference then requires obtaining the Cholesky factor of the posterior precision matrix $var(\by|\bz)^{-1} = \hat\bK^{-1} + \bD^{-1}$, which can be very expensive due to fill-in. Fortunately, the computational speed of CVecchia can be maintained without introducing meaningful additional approximation error by approximating the Cholesky factor using an incomplete Cholesky factorization (IC), as proposed for EVecchia in \citet{Schafer2020}. This approach is useful both for parameter inference based on evaluating the CVecchia likelihood and for making predictions.
We demonstrate numerically in Section \ref{sec:sim_noise} that this IC-based approach can by highly accurate in the context of CVecchia as well.


\section{Examples and numerical comparisons \label{sec:simul}}

We conducted simulation experiments to demonstrate that CVecchia is widely applicable and highly accurate. Specifically, we considered anisotropic, nonstationary, multivariate, and spatio-temporal GPs, and an example without any explicit inputs.
We begin by assuming that the covariance matrices are known; then, we demonstrate parameter estimation and prediction using our methods. Throughout, our proposed CVecchia approach is denoted by C-MM + C-NN. We compared to existing or other reasonable competing Vecchia approximations, which necessarily differ between simulation scenarios, because none of them are meaningfully applicable across all the scenarios.
We compared the different Vecchia methods in terms of the KL-divergence between the exact distribution $\mathcal{N}(\mathbf{0}, \bK)$ and the approximate distribution $\mathcal{N}(\mathbf{0},\hat\bK)$, averaged over 10 simulations in settings with known covariance structure and over 200 simulations in parameter-inference or prediction settings. Comparisons are carried out as a function of $m$, as all considered Vecchia methods become more accurate and more computationally expensive as $m$ increases, with a time complexity of $\order(nm^3)$.

\subsection{Anisotropic and nonstationary GPs \label{sec:sim_nonstat}}

We considered nonstationary GPs at $n = 30^2 = 900$ inputs selected uniformly at random on the unit square, $\inputspace = [0,1]^2$.
We compared various combinations of ordering (E-MM, C-MM, X-ord, Y-ord) and conditioning (E-NN, C-NN) schemes, where X-ord and Y-ord denote ordering by the first or second coordinate of the input space, respectively.
EVecchia corresponds to E-MM + E-NN. \citet{Vecchia1988}'s original approach is given by Y-ord + E-NN.

We used a nonstationary Mat\'ern covariance function \citep{Stein2005,Paciorek2006}:
\begin{equation}
\textstyle K(\bx , \bx') = \sigma^2 \frac{\left|\bA(\bx)\right|^{1/4}\left|\bA(\bx')\right|^{1/4}}{|\tilde\bA(\bx,\bx')|^{1/2}} \ M_{\frac{\nu(\bx)+\nu(\bx')}{2}} \big(\big((\bx-\bx')^{\top}\tilde\bA(\bx,\bx')^{-1} (\bx-\bx')\big)^{1/2}\big), \quad \bx , \bx' \in \inputspace,
\label{eq:nonstat}
\end{equation}
where $M_{\nu}(0)=1$, $M_{\nu}(x) =x^\nu B_\nu (x)$ for $x > 0$, $B_{\nu}$ is a modified Bessel function of order $\nu$, $\nu: \inputspace \rightarrow \mathbb{R}^+$ is the smoothness, $\bA : \inputspace \rightarrow \mathbb{R}^{d \times d}$ is a (positive definite) anisotropy matrix, and  $\tilde\bA(\bx,\bx') = (\bA(\bx) + \bA(\bx'))/2$. For simplicity, we assumed $\sigma=1$.

We considered the following settings as special cases of \eqref{eq:nonstat}:
\begin{description}
\item[Anisotropic:] $\nu(\bx) \equiv 0.5$, $\bA(\bx) \equiv 10^{-2} \,  \diag(a^{-2},1)$, where $a$ is the degree of anisotropy. 
\item[Varying smoothness:] $\nu(\bx) = 0.2 + 1.3 x_1$ (i.e., varying as a function of the first coordinate), $\bA \equiv 10^{-2} \,  \diag(1,1)$.
\item[Varying rotation:] $\nu(\bx) \equiv 0.5$, $$\bA(\bx) = \begin{pmatrix} \cos{\eta(\bx)} & \sin{\eta(\bx)}\\ -\sin{\eta(\bx)} & \cos{\eta(\bx)}\end{pmatrix}^{\top} diag(10^{-4}, 10^{-2}) \begin{pmatrix} \cos{\eta(\bx)} & \sin{\eta(\bx)}\\ -\sin{\eta(\bx)} & \cos{\eta(\bx)}\end{pmatrix}$$ is a rotation matrix with spatially varying angle $\eta(\bx) = \frac{\pi x_1}{2}$. 
\end{description}

\begin{figure}
    \centering
    \includegraphics[width=\textwidth]{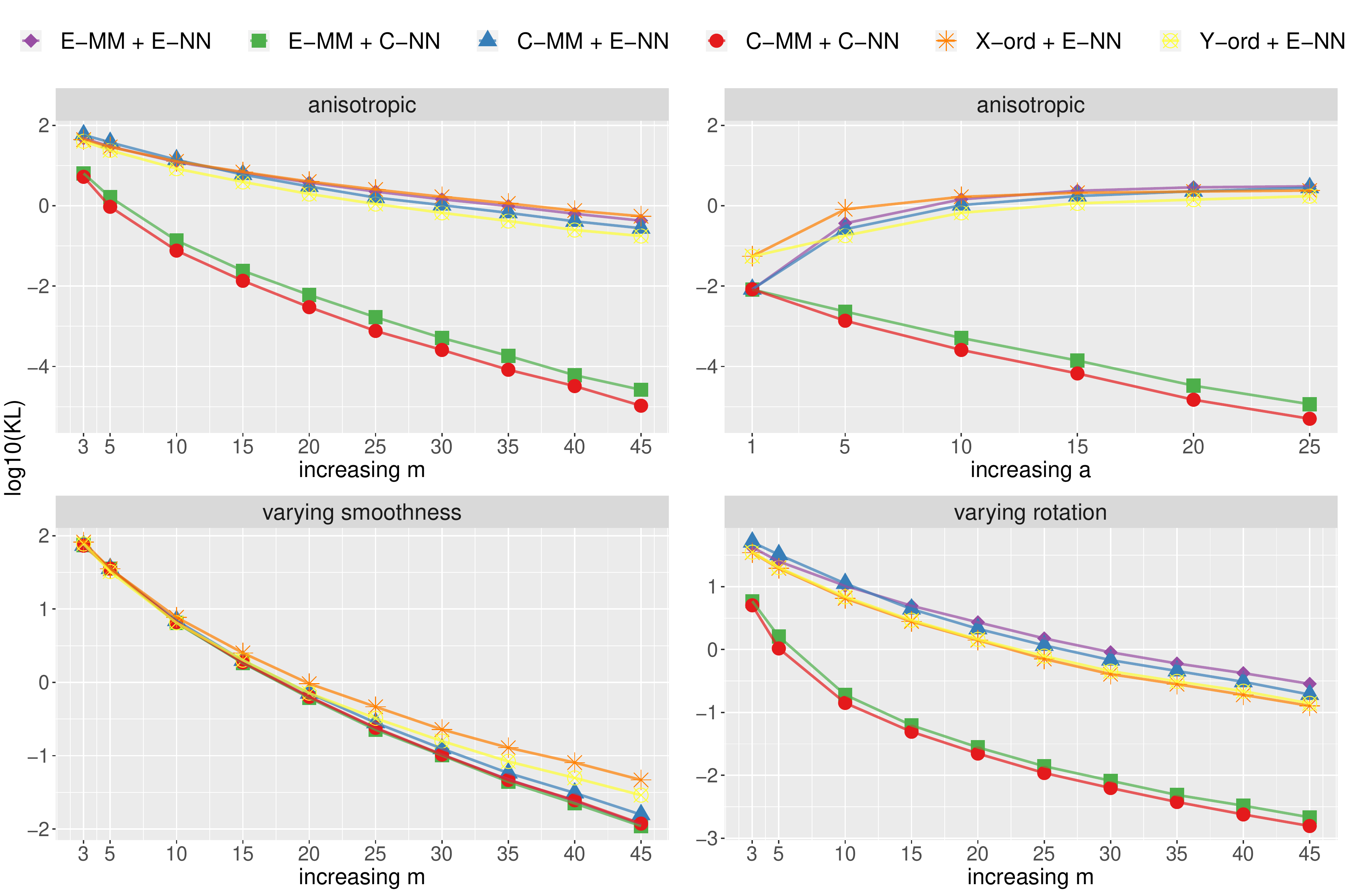}
    \caption{Log10-scale KL divergences between true and approximate likelihoods of GPs with the degree of anisotropy $a = 10$ for increasing size of conditioning sets $m$ \textbf{(top left)}, with $m = 30$ for increasing $a$ \textbf{(top right)}, with varying smoothness parameter $\nu = 0.2 + 1.3x_1$ (a function of the first coordinate) for increasing $m$ \textbf{(bottom left)}, and with varying rotation angle $\eta = \frac{\pi x_1}{2}$ for increasing $m$ \textbf{(bottom right)}}
    \label{fig:simulation_anisononst}
\end{figure}

In the anisotropic setting, the correlation-based distance $\tau_C(i,j)$ is a strictly increasing function of $\|\tilde\bx_i - \tilde\bx_j\|$, where $\tilde\bx_i = \bA^{-1/2}\bx_i$, because $B_\nu(\cdot)$ is strictly decreasing; thus, CVecchia is equivalent to EVecchia applied to the transformed inputs $\tilde\bx_1,\ldots,\tilde\bx_n$.
For varying smoothness and rotation, the transformed space is not easily identified.
However, as shown in Figure \ref{fig:simulation_anisononst}, C-NN was always more accurate than E-NN. In addition, using C-MM instead of E-MM led to further improvements for the anisotropic and varying-rotation setting. The improvement of CVecchia over existing methods was especially pronounced for strong anisotropy (i.e., large $a$) and for varying rotation.

\subsection{Multivariate GP \label{sec:sim_multiv}}

We considered a $p$-variate GP, $\by(\cdot) = (y^{(1)}(\cdot),\ldots,y^{(p)}(\cdot))^\top \sim \GP(0,K)$, with a cross-covariance function based on a latent dimension separating the processes \citep{Apanasovich2010},
\[
K_{i,j}(\bx,\bx') = \cov\big(y^{(i)}(\bx),y^{(j)}(\bx') \big) = \sigma^2 \exp\big( - \|\tilde{\bx}_i - \tilde{\bx}_j'\|/r\big), \qquad \bx,\bx' \in \inputspace, \hspace{0.35em} i,j \in \{1,\ldots,p\},
\]
where $\tilde{\bx}_i = \left(\bx^{\top} , \nu_i \right)^{\top} \in \mathbb{R}^{2+1}$, and
$\nu_i$ represents the location of the $i$-th component of the multivariate GP in the latent dimension. Thus, the dependence between $y^{(i)}(\cdot)$ and $y^{(j)}(\cdot)$ decreases with their latent distance $|\nu_i - \nu_j|$.
We assumed $\sigma^2 = 1$, $r = 0.1$, and $\nu_1 = 0$.
We considered a total of $n$ observations stacked into a vector $\by = (\by^{(1)}{}^\top,\ldots,\by^{(p)}{}^\top)^\top$, where $\by^{(j)} = (y^{(j)}_1,\ldots,y^{(j)}_{n_j})^\top$ with $y^{(j)}_i = y^{(j)}(\bx^{(j)}_i)$, and $n= \sum n_j$.

Here, $\tau_C(i,j)$ is a strictly increasing function of $\|\tilde\bx_i - \tilde\bx_j\|$, and so CVecchia is equivalent to EVecchia applied to the transformed inputs $\tilde\bx_1,\ldots,\tilde\bx_n$ in the expanded $(2+1)$-dimensional input space.
The competing methods considered in Section \ref{sec:sim_nonstat} are not directly applicable in this multivariate setting, and so we considered the following alternative approaches.
S-E-MM separately orders the entries of each $\by^{(j)}$ according to an MM ordering of the corresponding inputs $\bx^{(j)}_1,\ldots,\bx^{(j)}_{n_j}$, and then orders $\by^{(1)}$, then $\by^{(2)}$, and so forth, in $\by$.
To construct conditioning sets of size $m$, J-E-NN considers the nearest $m$ inputs in $\inputspace$ among all previously ordered variables in the joint vector $\by$, while S-E-NN carries out nearest-neighbor conditioning separately for each $\by^{(1)},\ldots,\by^{(p)}$.
D-E-NN divides $m$ by $p$ and finds the $m/p$ nearest previously ordered neighbors among each of the components $\by^{(1)},\ldots,\by^{(p)}$ (according to their inputs in $\inputspace$).

We compared these various Vecchia approaches for bivariate ($p=2$) and trivariate ($p=3$) GPs, with each process observed at $n_j = 400$ randomly sampled locations in $\inputspace$.
In both cases, we assumed that the processes were observed in a misaligned manner (i.e., $\bx^{(j)}_i \neq \bx^{(k)}_{i}$ for $j \neq k$).
As shown in Figure~\ref{fig:simulation_multivariate}, C-NN outperformed other conditioning approaches; C-MM provided additional improvements in some settings over S-E-MM.
We also considered the setting of identical observation locations for the different processes (i.e., $\bx^{(j)}_i = \bx^{(k)}_{i}$), but the results were very similar to the misaligned case and are hence not shown.

\begin{figure}
    \centering
    \includegraphics[width=\textwidth]{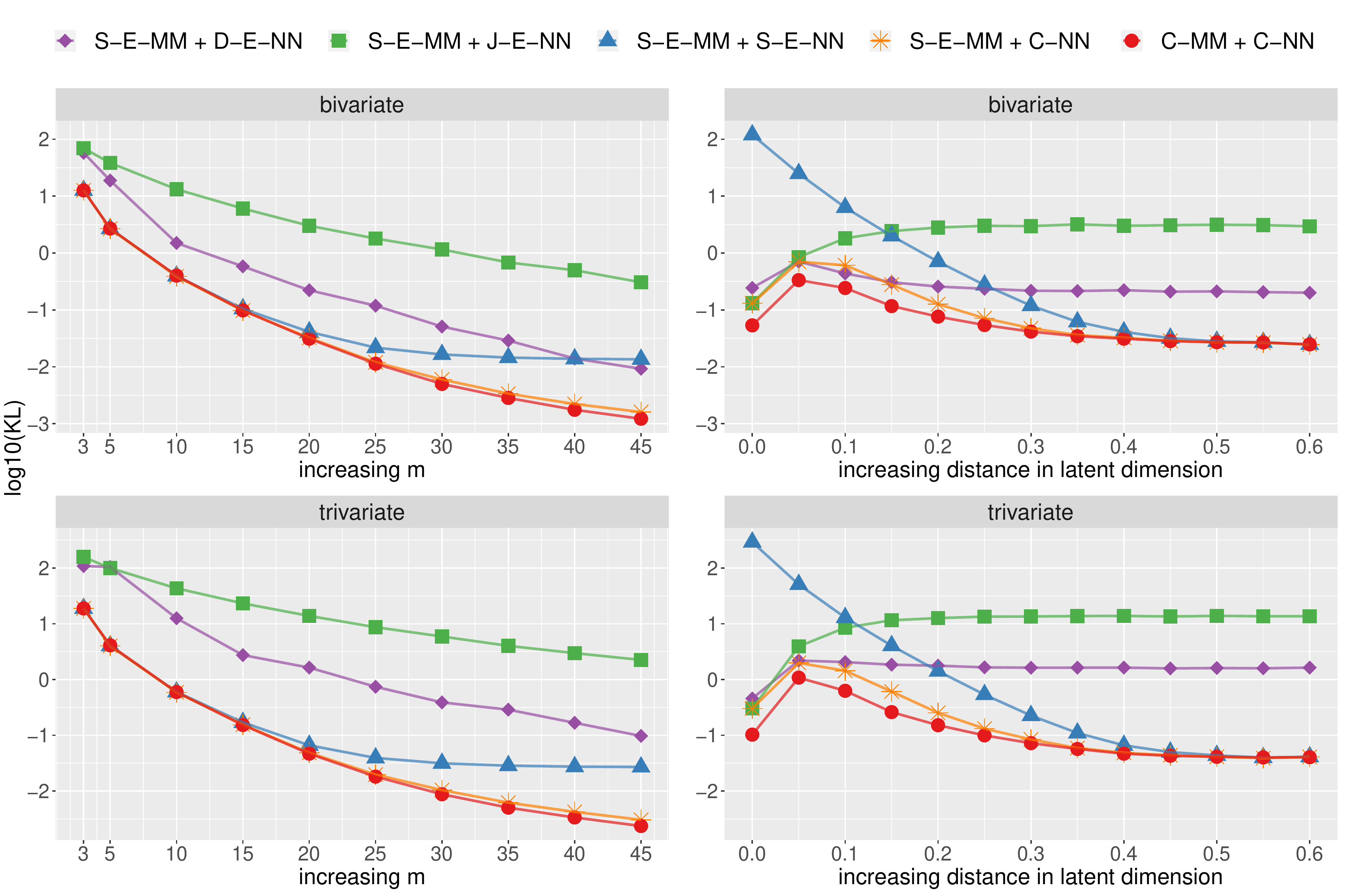}
    \caption{Log10-scale KL divergences between true and approximate likelihoods of distinctly observed bivariate GPs with distance in latent dimension $\Delta = |\nu_1 - \nu_2| = 0.4$ for increasing size of conditioning sets $m$ \textbf{(top left)} and with $m=20$ for increasing $\Delta$ \textbf{(top right)}, and distinctly observed trivariate GPs with distance in latent dimension $\Delta = 0.4$, where $\nu_j = (j-1)\Delta$, for increasing $m$ \textbf{(bottom left)} and with $m=20$ for increasing $\Delta$ \textbf{(bottom right)}}
    \label{fig:simulation_multivariate}
\end{figure}

\subsection{Spatio-temporal GP \label{sec:sim_spacetime}}

We considered a spatio-temporal GP indexed by a space-time input coordinate $\bx = (\bs^\top,t)^\top$, where we assumed that space is scaled to the unit square, $\bs \in [0, 1]^2$, and time is scaled to the unit interval, $t \in [0, 1]$.
We considered a space-time covariance function of the form
\begin{equation}
    \label{eq:spacetime}
K \left( \bx , \bx'\right) = \sigma^2 \exp ( -\| (\bs - \bs') \|/r_s - |t-t'|/r_t) = \sigma^2 \exp ( -\| \bA^{-1}(\bx - \bx') \| ), 
\end{equation}
where $r_s$ and $r_t$ are the spatial and temporal range parameters, and $\bA = diag(r_s, r_s, r_t)$. We assumed that $\sigma^2 = 1$, $r_s = 0.1$, and $r_t = 1.0$.

Here, $\tau_C(i,j)$ is a strictly increasing function of $\|\tilde\bx_i - \tilde\bx_j\|$, where $\tilde\bx_i = \bA^{-1/2}\bx$, and so CVecchia is equivalent to EVecchia applied to the transformed inputs $\tilde\bx_1,\ldots,\tilde\bx_n$.
As space and time are not commensurable, the previous competing methods are again not meaningful. We considered ordering by time (T-ord), and conditioning on the NN in time (T-NN). Note that, when inputs are taken at the same time point, T-ord orders the inputs according to the values of the second spatial coordinate. If these values are again the same, it uses the values of the first coordinate. Further, we considered E-NN based on the distance of the (unit-scaled) space-time coordinates, $\|\bx - \bx' \|$. To our understanding, the correlation-based conditioning approach proposed in \citet{Datta2016a} corresponds to T-ord + C-NN.

As illustrated in Figure~\ref{fig:simulation_scenarios}, we simulated $n=900$ space-time observations on the unit cube according to four different simulation scenarios, the latter three of which were chosen to mimic common observation patterns for environmental data: 
\begin{description}
\item[Random] Space-time coordinates are selected uniformly at random, and so they are irregular in space and time.
\item[Station] Observations are obtained at 9 regular time points at 100 irregularly spaced ``monitoring stations.''
\item[Gridded] Observations are obtained at 9 regular time points on a regular grid of size $10 \times 10 = 100$ in space (e.g., mimicking output from climate models).
\item[Satellite] Similar to data from polar-orbiting satellites, at 900 regularly spaced time points, we have 90 observations along each of 5 one-dimensional tracks at two repeat cycles.
\end{description}

\begin{figure}
    \centering
    \includegraphics[width=\textwidth]{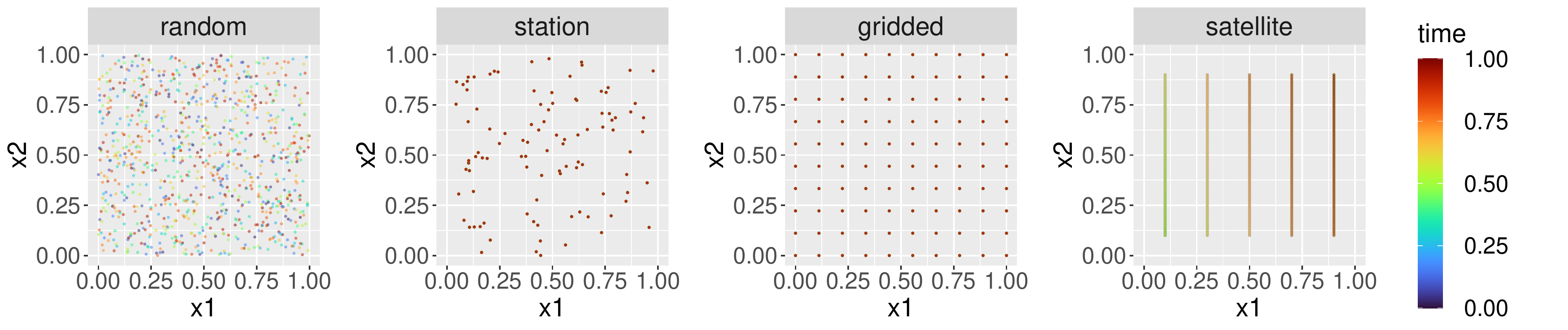}
    \caption{Inputs for spatio-temporal simulation scenarios plotted against spatial coordinates $x_1$ and $x_2$. The inputs are color-coded by time, although later time points exactly cover earlier time points in the Station and Gridded case, and also to some degree in the Satellite scenario due to its two repeat cycles.}
    \label{fig:simulation_scenarios}
\end{figure}

As shown in Figure \ref{fig:simulation_spacetime}, CVecchia outperformed the competing methods.

Note that we repeated the experiments from Sections~\ref{sec:sim_nonstat}--\ref{sec:sim_spacetime} for larger $n = 3,600$, but we found out that the shapes of the KL curves were very similar to those in Figures~\ref{fig:simulation_anisononst}, \ref{fig:simulation_multivariate} and \ref{fig:simulation_spacetime}.

\begin{figure}
    \centering
    \includegraphics[width=\textwidth]{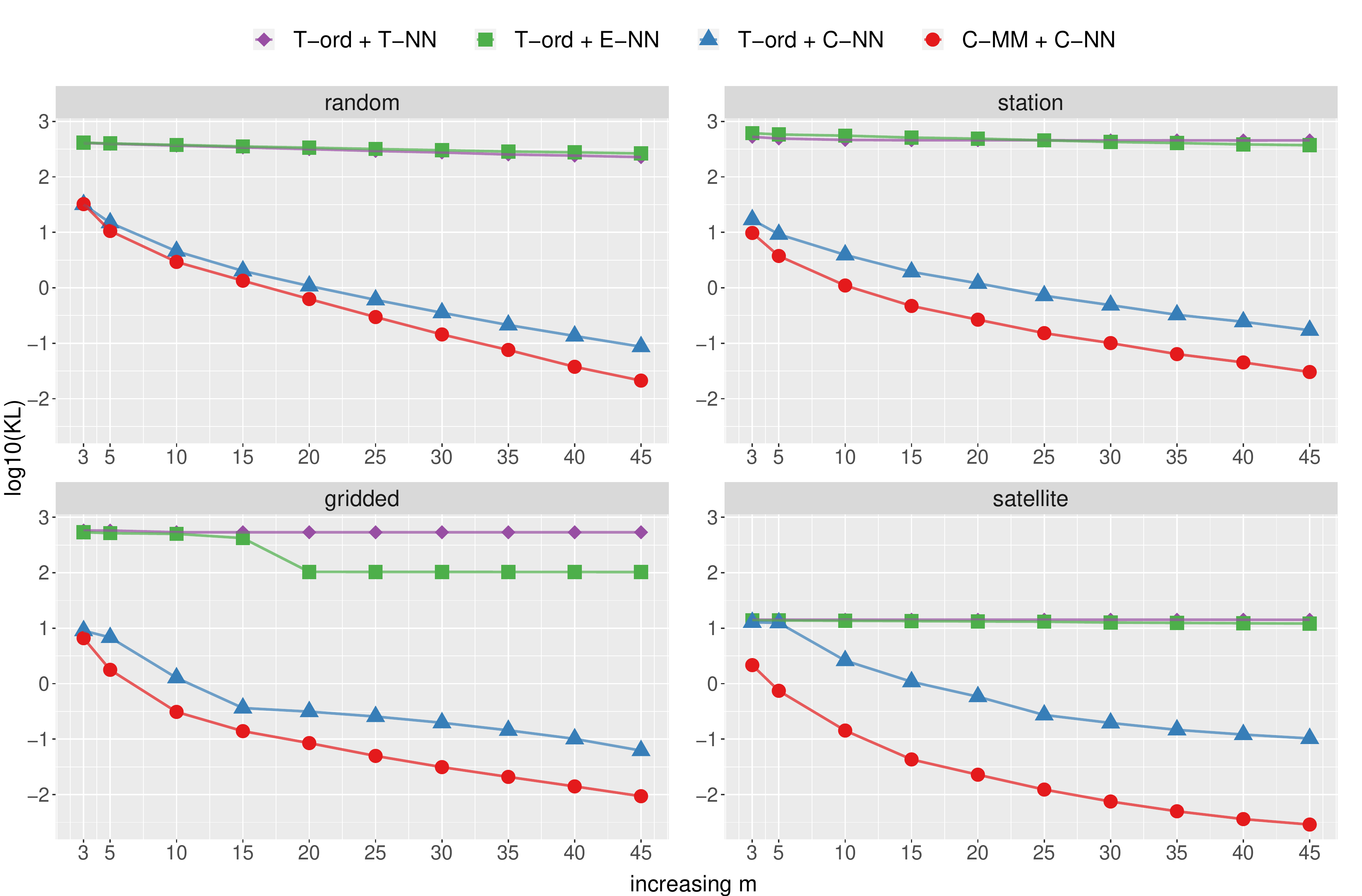}
    \caption{Log10-scale KL divergences between true and approximate likelihoods of spatio-temporal GPs (Figure \ref{fig:simulation_scenarios}) for increasing size of conditioning sets $m$}
    \label{fig:simulation_spacetime}
\end{figure}

\subsection{Gaussian hierarchical model \label{sec:sim_nonspatial}}

We have so far considered only cases in which covariance structure is computed based on inputs. In this subsection, we offer an example that has no inputs, so that CVecchia is applicable but EVecchia is not. Motivated by hierarchical models which are widely used for combining information and describing heterogeneity between sub-populations, we assumed that $\mu \sim \normal \left( 0 , \sigma_0^2 \right)$ and 
\[
\mu_{i_1 , \ldots , i_j} \ |\  \mu_{i_1 , \ldots , i_{j-1}} \overset{i.i.d.}{\sim} \normal( \mu_{i_1 , \ldots , i_{j-1}} , \sigma_j^2 )\ , \quad i_k = 1,2, \quad k=1,\ldots,j, \quad j = 1, \ldots , J,
\]
where $\sigma_0^2 = \sigma_1^2 = \ldots = \sigma_k^2 = 1$. We observe $\by = \{y_{i_1,\ldots,i_J}: i_k = 1,2, \, k=1,\ldots,J \}$ with $y_{i_1,\ldots,i_J} = \mu_{i_1,\ldots,i_J}$ at the finest level. This hierarchical model is illustrated for depth $J=3$ on the left side of Figure~\ref{fig:simulation_nonspatial}. 
We have $\cov \left( y_{i_1 , \ldots , i_J}, y_{l_1 , \ldots , l_J} \right) = \sum_{r=0}^{\alpha} \sigma_r^2$, where $\alpha$ is the level up to which $y_{i_1 , \ldots , i_J}$ and $y_{l_1 , \ldots , l_J}$ have a common ancestor.

We conducted a numerical comparison using depth $J=12$, and so $n=2^{12} = 4{,}096$.
Because the competing methods used in other experiments are again not directly applicable, we compared three variants of the Vecchia approximation in the right panel of Figure~\ref{fig:simulation_nonspatial}, where L-ord denotes lexicographic (or simply left-to-right) ordering, R-ord denotes random ordering, and R-N conditions on randomly selected previously ordered entries. We repeated R-ord + R-N 200 times, but interestingly the resulting KL divergences appear quite similar when plotted on the log scale. CVecchia strongly outperformed the other two methods.

\begin{figure}
    \centering
    \includegraphics[align=c,width=0.49\textwidth]{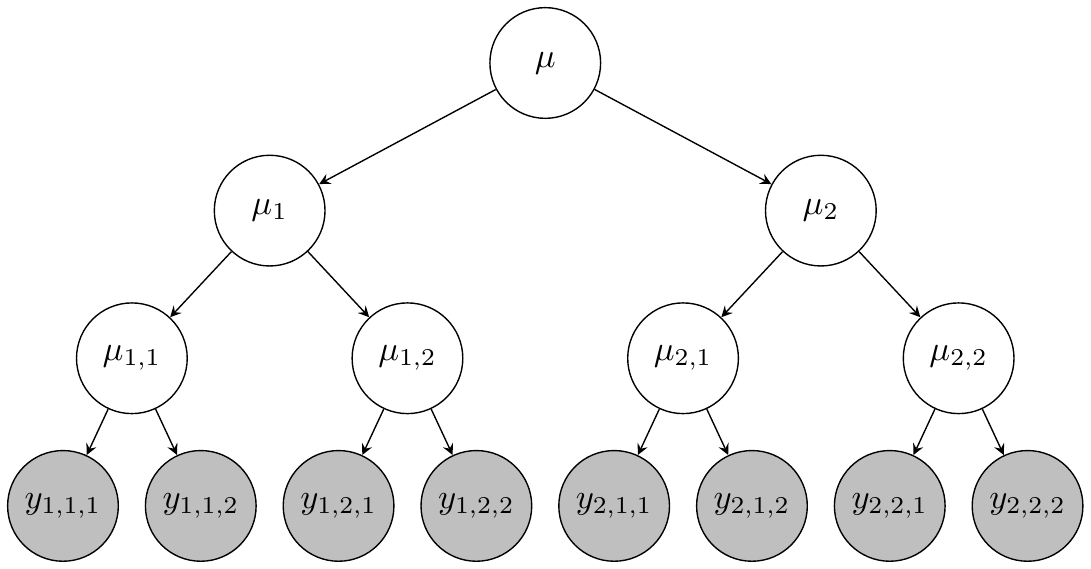}
    \includegraphics[align=c,width=0.5\textwidth]{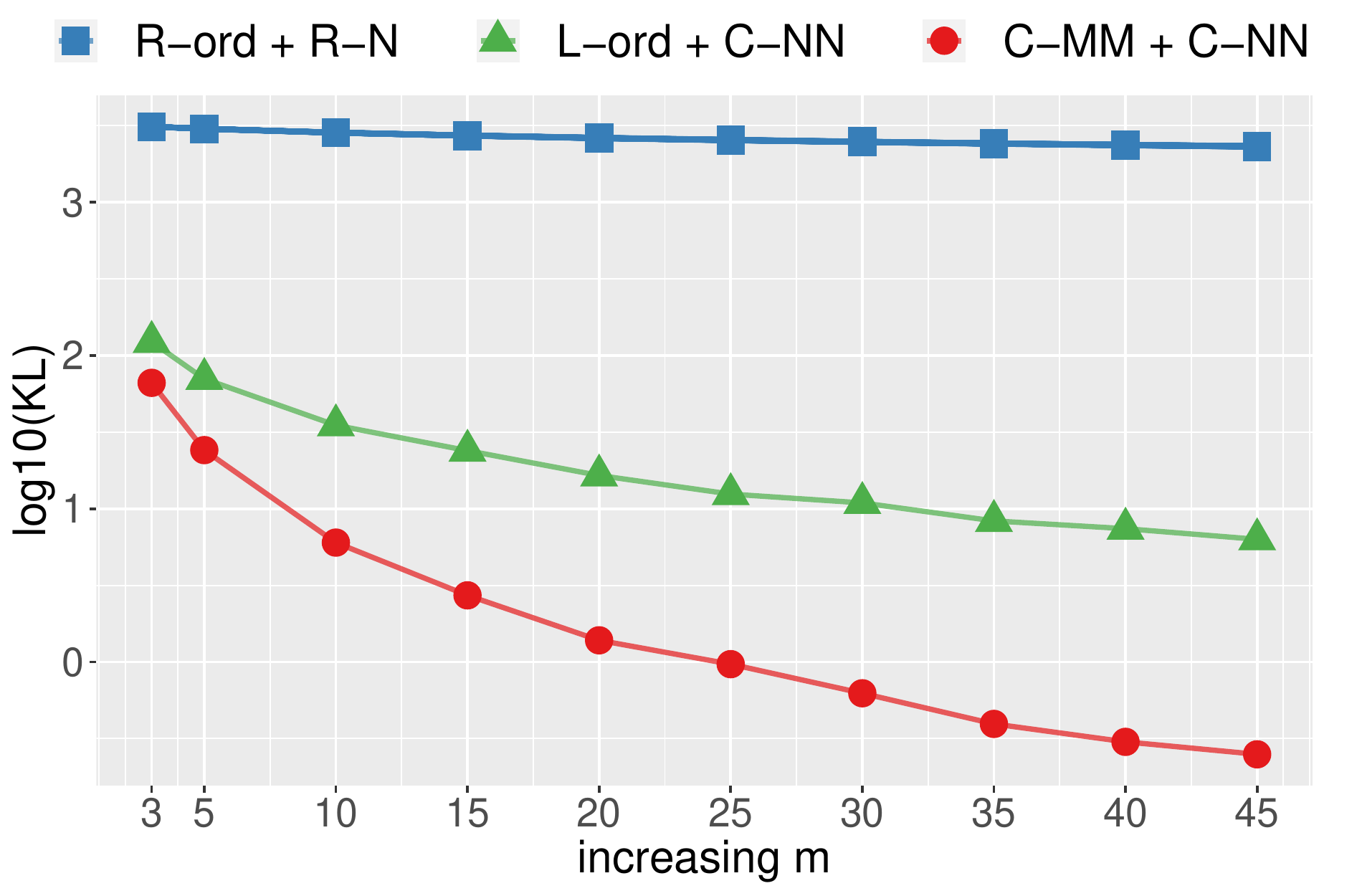}
    \caption{A graphical representation of the hierarchical normal model with $J=3$ \textbf{(left)} and log10-scale KL divergences between true and approximate likelihoods with $J=12$ for increasing $m$ \textbf{(right)}}
    \label{fig:simulation_nonspatial}
\end{figure}

\subsection{Parameter estimation \label{sec:sim_estim}}

We examined the performance of frequentist parameter estimation using Fisher scoring (Section \ref{sec:estim}) in the Station and Satellite space-time scenarios of Section \ref{sec:sim_spacetime}. The task was to estimate the range parameters $r_s$ and $r_t$. We updated the ordering and conditioning at every Fisher-scoring iteration ($\mathcal{G} = \{1,2,3,\ldots\}$).

We compared the different approximation methods described in Section \ref{sec:sim_spacetime}. For reference, we also considered ``optimal'' parameter estimation using the exact GP without Vecchia approximation (or, equivalently, a Vecchia approximation with $m=n-1$). The methods were compared in terms of the average KL divergence between the true distribution (using the true parameter values) and the approximate distribution (using each method's estimated parameters). We also computed the root mean squared difference (RMSD) between the values of $r_s$ and $r_t$ as estimated by the exact GP and as estimated by the different Vecchia approximations.

As shown in Figure \ref{fig:simulation_fisher}, CVecchia produced by far the most accurate estimated distributions, which were similar to those based on the exact GP for $m\geq 25$. While the RSMDs were quite noisy, despite averaging over 200 simulated datasets, CVecchia also generally performed best in terms of RMSD.

\begin{figure}
     \centering
     \begin{subfigure}[b]{\textwidth}
         \centering
         \includegraphics[width=\textwidth]{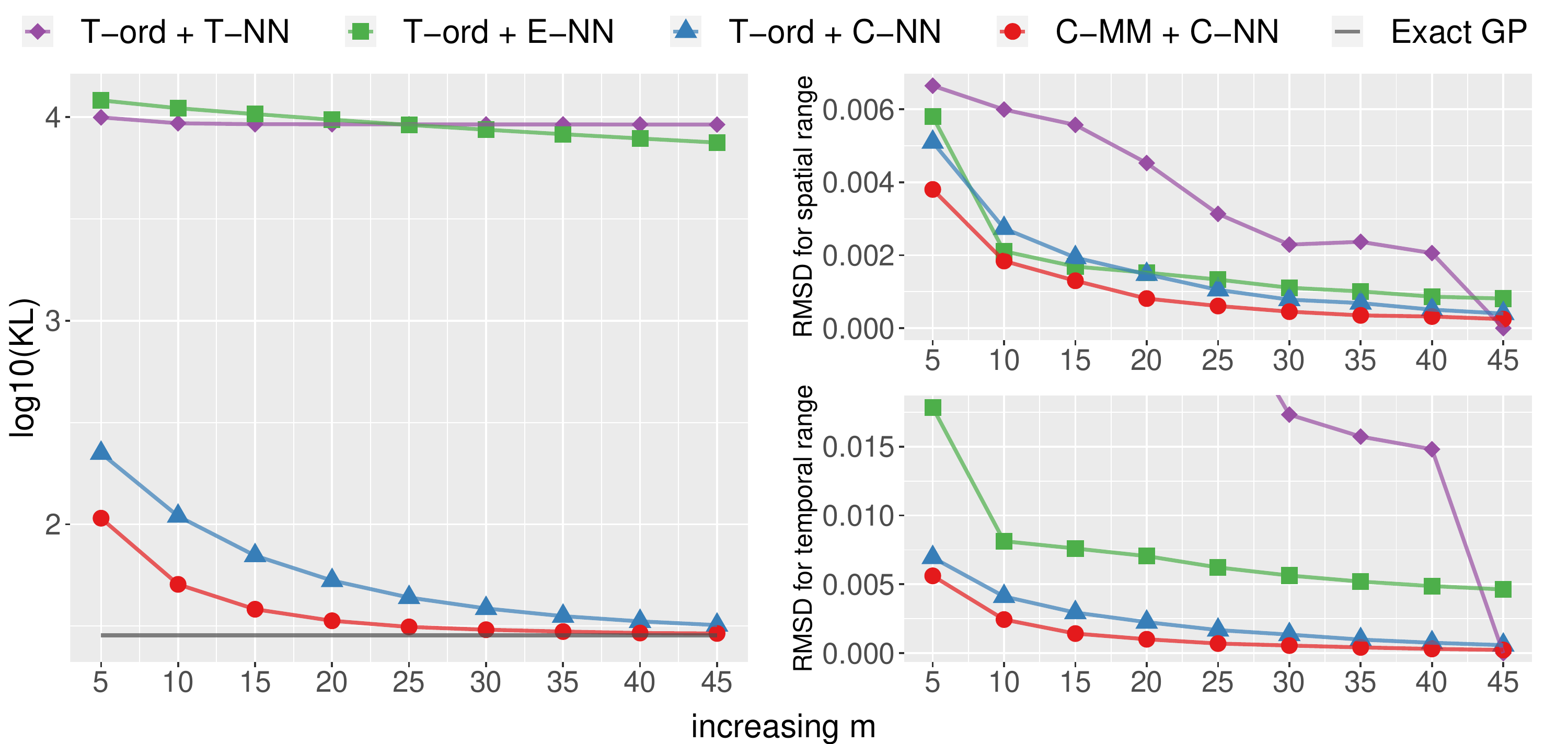}
         \caption{Station scenario}
         \label{fig:simulation_fisher_monitoringstation}
     \end{subfigure}

\bigskip\smallskip

\begin{subfigure}[b]{\textwidth}
         \centering
         \includegraphics[width=\textwidth]{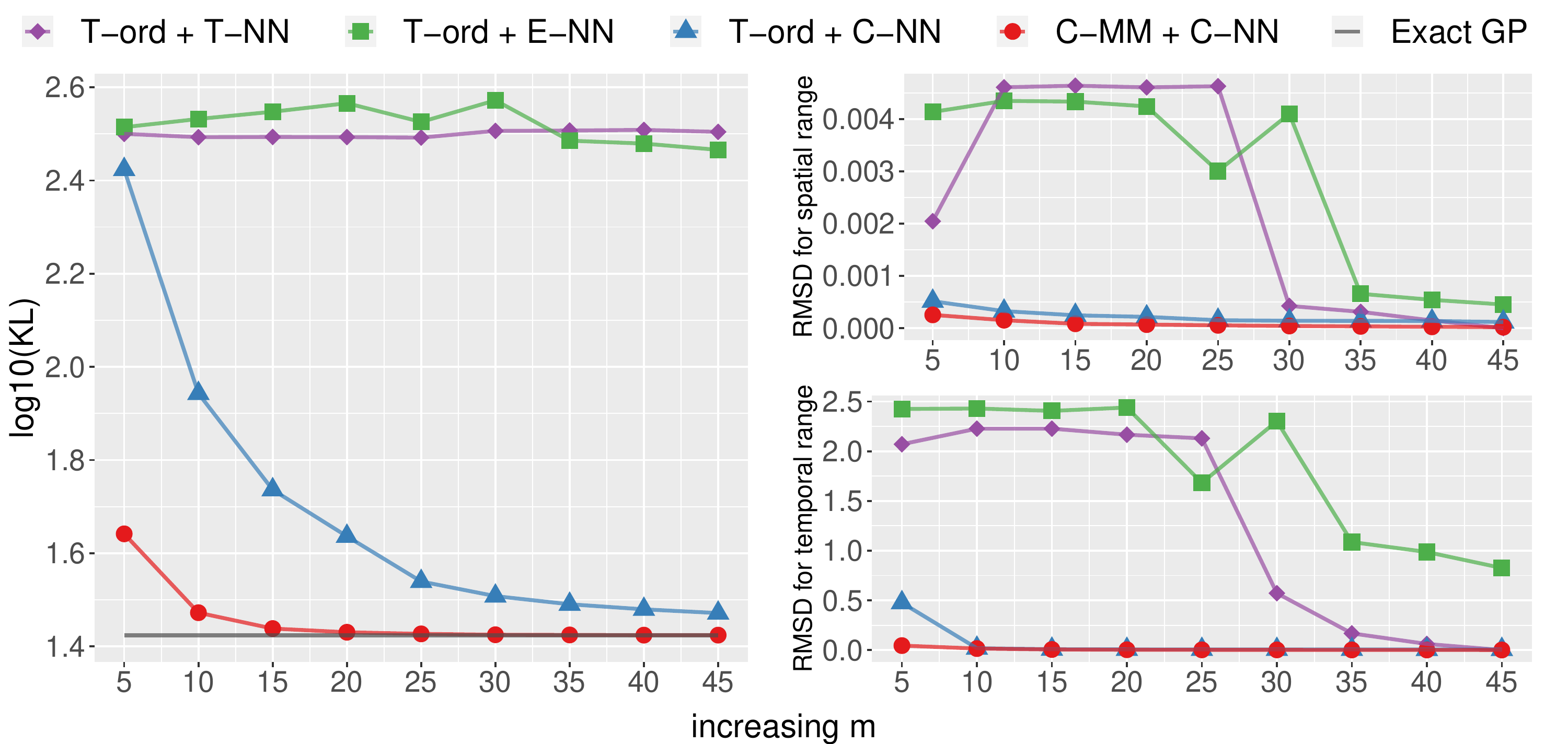}
         \caption{Satellite scenario}
         \label{fig:simulation_fisher_satellite}
     \end{subfigure}
     
     \medskip
     
        \caption{Performance in parameter estimation using Fisher scoring under two spatio-temporal scenarios (station and satellite) in Figure \ref{fig:simulation_scenarios}. Each subfigure contains three plots: Log10-scale KL divergences between true and estimated likelihoods \textbf{(left)} and root mean squared difference (RMSD) for spatial range parameter \textbf{(top right)} and for temporal range parameter \textbf{(bottom right)}, for increasing size of conditioning sets $m$. }
        \label{fig:simulation_fisher}
\end{figure}

\subsection{Prediction \label{sec:sim_predtn}}

To illustrate prediction performance, we again considered the Random, Station, and Satellite space-time scenarios from Section \ref{sec:sim_spacetime}. Of the 900 space-time observations, 100 were randomly selected as test data, and so the training data consisted of the remaining $n=800$ observations. To lessen the computational cost of our many comparisons, we assumed that the covariance parameters were known.

Figure~\ref{fig:simulation_prediction} shows the prediction performance for the 100 test data, as measured by the logarithmic score \citep[see][for details]{Gneiting2014} averaged over 200 simulation runs. In the Random and Station scenarios, CVecchia and T-ord + C-NN both performed well. In the Satellite scenario, CVecchia performed best.

\begin{figure}
    \centering
    \includegraphics[width=\textwidth]{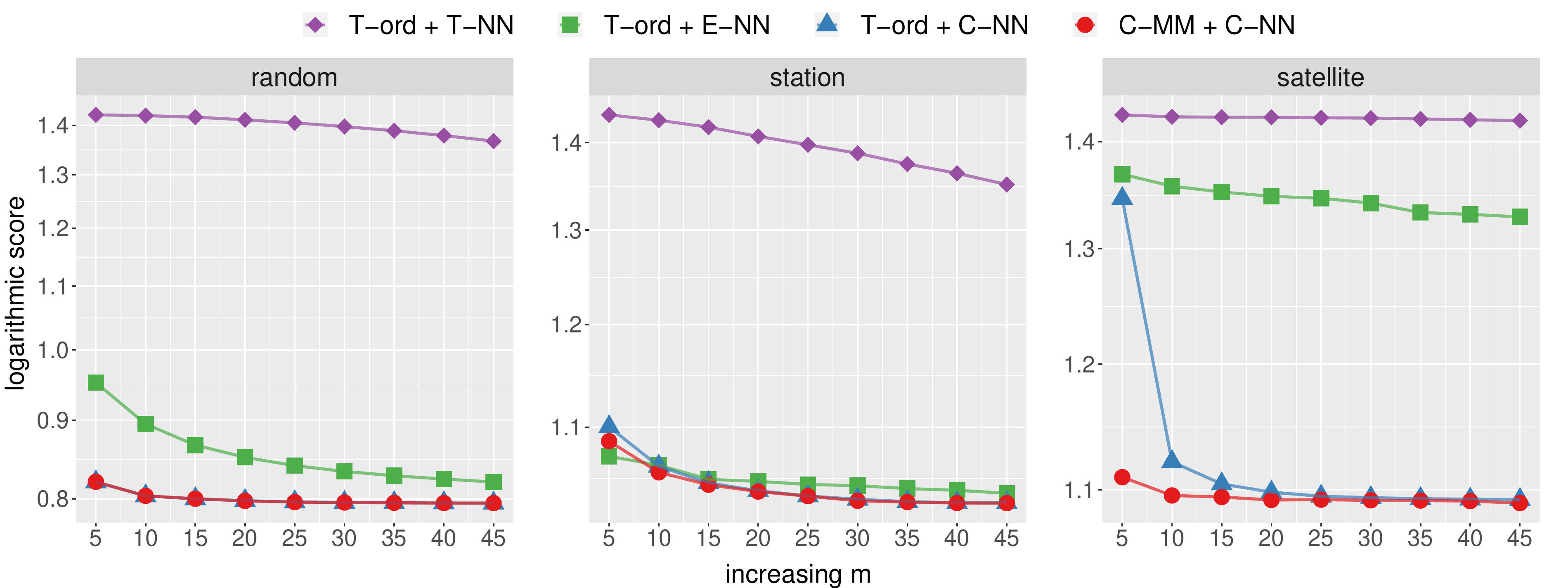}
    \caption{Logarithmic score for predictions under three space-time scenarios: Random \textbf{(left)}, Station \textbf{(center)}, and Satellite \textbf{(right)}. Note that the $y$-axes of the panels are on a log scale.}
    \label{fig:simulation_prediction}
\end{figure}

\subsection{Bayesian inference for noisy data \label{sec:sim_noise}}

We  considered Bayesian inference with CVecchia for noisy data under the Random, Station, and Satellite space-time scenarios of Section~\ref{sec:sim_spacetime}. The task was to calculate posterior densities of the range parameters $r_s$ and $r_t$. We assumed that the priors were
$
\log(r_s) \sim \normal \left( \log(0.1), 0.6^2 \right)
$
and 
$
\log(r_t) \sim \normal \left( \log(1.0), 0.6^2 \right)
$,
with constant noise variances, $\bD = (0.4) \bI_n$. Figure \ref{fig:simulation_noise} presents two different approaches described in Section \ref{sec:noise}: one is the naive approach that directly uses the covariance matrix of the noisy observations, and the other is the IC-based approach that applies CVecchia to the noise-free variables and then adds the noise. As claimed in Section~\ref{sec:noise}, Figure \ref{fig:simulation_noise} shows that, while CVecchia provided reliable approximate posteriors compared to the other methods, the IC-based approach provided further improvements.
C-MM and C-NN were fixed based on the true values of $\bftheta$; we also tried updating C-MM for each evaluated $\bftheta$ value, but this resulted in unstable posteriors.

\begin{figure}
    \centering
    \includegraphics[width=\textwidth]{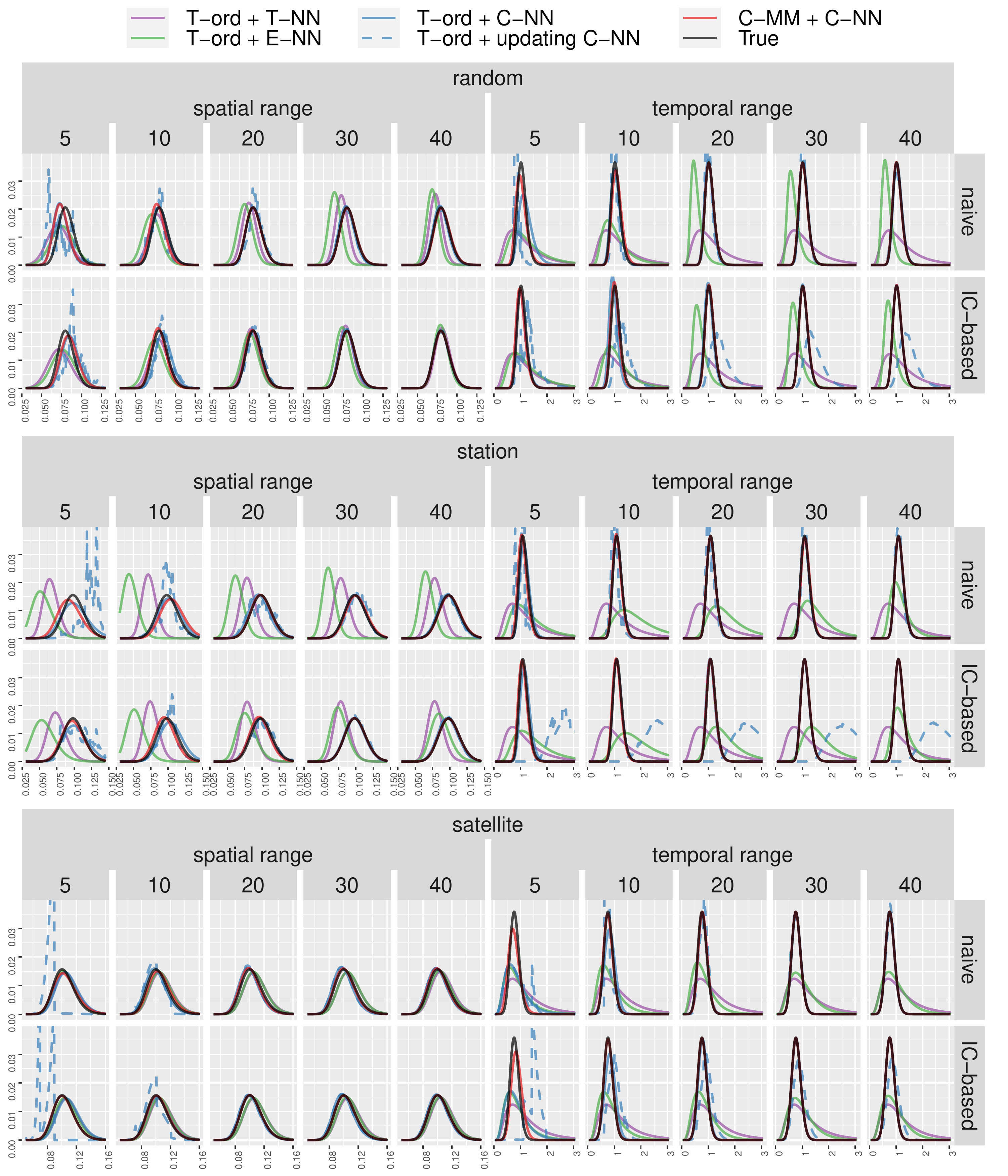}
    \caption{True and approximate posterior densities of spatial and temporal ranges under three space-time scenarios from Figure \ref{fig:simulation_scenarios}: Random \textbf{(top)}, Station \textbf{(center)}, and Satellite \textbf{(bottom)}. For each scenario, the left (right) five columns are posteriors of the spatial (temporal) range parameter. For each range parameter, the first (second) row presents posteriors with the naive (IC-based) approach for size of conditioning sets $m = 5, 10, 20, 30, 40$. For $m \ge 10$, some lines are not visible because they are covered by the (exact) black lines.}
    \label{fig:simulation_noise}
\end{figure}


\section{Application to real data \label{sec:realdat}}

We assessed the use and efficacy of CVecchia to fit and predict regional climate model (RCM) outputs. The North American Regional Climate Change Assessment Program \citep[NARCCAP; ][]{mearns2009regional} is a research program designed to (1) provide high-resolution projections of climate change, (2) investigate uncertainties in regional climate change simulations based on different atmosphere-ocean general circulation models (AOGCMs), and (3) evaluate RCM performance over North America \citep{mearns2012north}. While the program ran 50-km spatial resolution simulation based on multiple RCMs driven by multiple AOGCMs, we only considered the Canadian regional climate model (CRCM) using the NCEP-DOE Reanalysis II (NCEP) as boundary conditions. The details on RCMs and AOGCMs in the NARCCAP are available from \href{https://www.narccap.ucar.edu/}{https://www.narccap.ucar.edu/}.

In particular, we studied a bivariate spatio-temporal dataset given by a maximum and minimum daily surface air temperature (tasmax and tasmin) for June--August 2001 (92 days) in the South region \citep[Arkansas, Kansas, Louisiana, Mississippi, Oklahoma, and Texas; see][]{karl1984regional}. Figure~\ref{fig:realdata_tas} shows tasmax and tasmin fields in the South region on selected days. The cartographic boundary files of the south region are available from \href{https://www.census.gov/geographies/mapping-files/time-series/geo/carto-boundary-file.html}{https://www.census.gov/}. The total sample size is $n_{total} = 78{,}384 \times 2 = 156{,}768$. We split the dataset into training ($n_{train} = 114{,}298$) and test ($n_{test} = 42,470$) sets in the following manner: (1) randomly select 12 locations for each time slice; (2) assign observations (for both variables) corresponding to space-time locations on the $5^2 \times 3$ space-time cube centered at the selected locations to the test set; and (3) assign the remaining space-time locations to the training set.

We fit a joint model of tasmin and tasmax using the training set and then carried out predictions on the test set. Let $\by_{tasmin}$ and $\by_{tasmax}$ be training vectors of tasmin and tasmax, respectively. We modeled them as
\begin{equation}
\begin{bmatrix} 
\by_{tasmin} \\ \by_{tasmax} \end{bmatrix} 
\sim \normal_{n + n} \left( 
\begin{bmatrix} \textbf{1} & \textbf{0} \\ \textbf{1} & \textbf{1} \end{bmatrix}
\begin{bmatrix} \beta_0 \\ \beta_1 \end{bmatrix}
,
\bK
\right)
\end{equation}
using a Mat\'ern covariance function with a different range parameter for each dimension (latitude, longitude, time, and latent dimension); that is, 
\[
\textstyle
\bK_{i, j} = K (\tilde\bx_i, \tilde\bx_j) = \sigma^2 \, \frac{2^{1-\nu}}{\Gamma (\nu)} \, \| \bA^{-1} (\tilde{\bx}_i - \tilde{\bx}_j) \|^{\nu} \, B_{\nu}( \| \bA^{-1} (\tilde{\bx}_i - \tilde{\bx}_j) \| ),
\]
$\tilde{\bx} = \left( \bx^{\top} , \xi \right)^{\top}$, $\bx$ is a space-time coordinate, $\xi$ is an indicator variable that indicates whether $\tilde\bx$ corresponds to tasmin, $\Gamma$ is the gamma function, $B_\nu$ is the modified Bessel function of the second kind, and $\bA = diag (r_{lat}, r_{lon}, r_{t}, r_{l}).$ Assuming that $\nu = 0.75$ (based on preliminary analyses) and a nugget of zero, we estimated the unknown parameters $\beta_0$, $\beta_1$, $r_{lat}$, $r_{lon}$, $r_{t}$, $r_{l}$ using the Fisher scoring approach described in Section~\ref{sec:estim}; the result is given in Table~\ref{tbl:realdata_estimates}.

Figure~\ref{fig:realdata_perf} shows the prediction performance for the test set, as measured by the root mean square prediction error (RMSPE), compared to five other Vecchia variants. S-E-MM + S-E-NN and S-E-MM + J-E-NN are from Section~\ref{sec:sim_multiv} and based on Euclidean distance between unit-scaled space-time coordinates. T-ord + T-NN is from Section~\ref{sec:sim_spacetime}. Note that T-ord separately orders observations of each temperature field by time and then joins them. S-C-NN carries out C-NN conditioning separately for each temperature field, while J-C-NN searches C-NN in the joint vector. We applied a grouping algorithm \citep{Guinness2016a} for improving computational efficiency to all methods except T-ord + T-NN, because interestingly it resulted in a doubling of the computational cost for that method.

CVecchia (C-MM + C-NN) provided the lowest RMSPE for any $m$ considered. The improvement was substantial, with CVecchia's accuracy with $m=10$ surpassing that of S-E-MM + J-E-NN with $m=50$, whose computational cost is roughly two orders of magnitude higher due to the cubic scaling in $m$. Moreover, as shown in the right panel of Figure~\ref{fig:realdata_perf}, CVecchia offered a better trade-off between run time and prediction accuracy. The run-time analysis was performed on a 64-bit workstation with 16 GB RAM and an Intel Core i7-8700K CPU running at 3.70 GHz.
We also carried out a comparison in terms of the logarithmic score, but the resulting curves looked almost identical to the RMSPE curves in Figure~\ref{fig:realdata_perf}.

\begin{table}
\centering
\caption{For the NARCCAP data, parameter estimates for the six methods ($m = 50$) using a Mat\'ern covariance function with a different range parameter for each dimension, smoothness $\nu = 0.75$, and zero nugget. Temperatures are in Kelvin, the spatial region is scaled to fit into the unit square (without changing its shape), and the time period is scaled to the unit interval.}
\label{tbl:realdata_estimates}
\begin{tabular}{l|ccccccc}
 & $\hat{\beta}_0$ & $\hat{\beta}_1$ & $\hat{\sigma}^2$ & $\hat{r}_{lat}$ & $\hat{r}_{lon}$ & $\hat{r}_{t}$ & $\hat{r}_{l}$ \\ \hline
 S-E-MM + S-E-NN & 278.513 & 12.084 & 61.983 & 0.819 & 0.742 & 0.036 & 2.000 \\
 S-E-MM + J-E-NN & 278.200 & 14.033 & 62.906 & 0.828 & 0.750 & 0.036 & 2.681 \\
 T-ord + T-NN & 275.143 & 13.613 & 70.134 & 0.890 & 0.792 & 0.001 & 2.000 \\
 T-ord + S-C-NN & 268.826 & 11.451 & 51.585 & 0.719 & 0.656 & 0.040 & 2.000 \\
 T-ord + J-C-NN & 266.677 & 12.198 & 51.331 & 0.716 & 0.654 & 0.040 & 2.498 \\
 C-MM + C-NN & 276.754 & 13.054 & 38.859 & 0.593 & 0.543 & 0.026 & 1.668
\end{tabular}
\end{table}

\begin{figure}
    \centering
    \includegraphics[width=\textwidth]{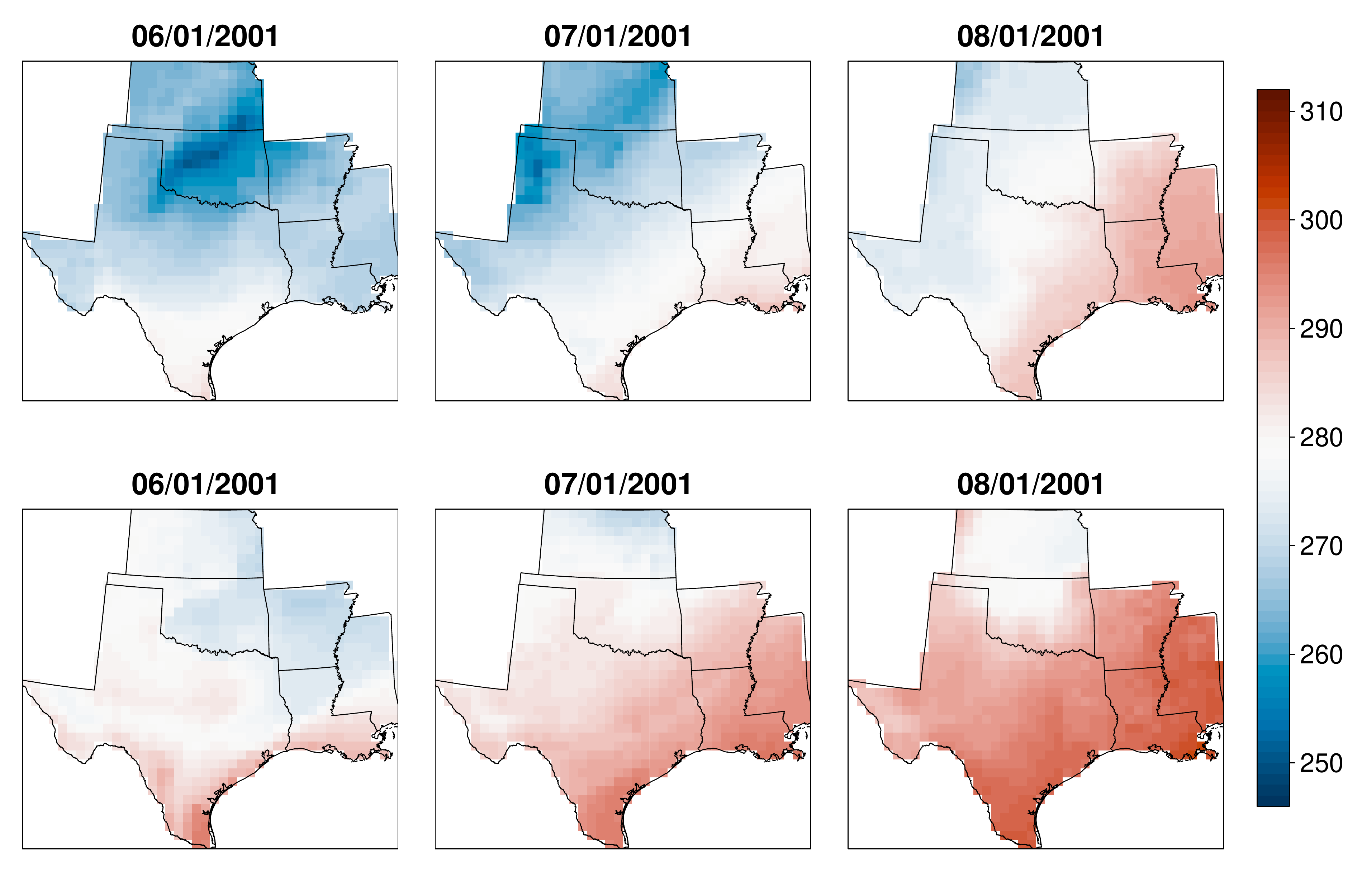}
    \caption{Minimum \textbf{(top row)} and maximum \textbf{(bottom row)} surface air temperature fields (in degrees Kelvin) in the South region (Arkansas, Kansas, Louisiana, Mississippi, Oklahoma and Texas) from NARCCAP}
    \label{fig:realdata_tas}
\end{figure}

\begin{figure}
    \centering
    \includegraphics[width=\textwidth]{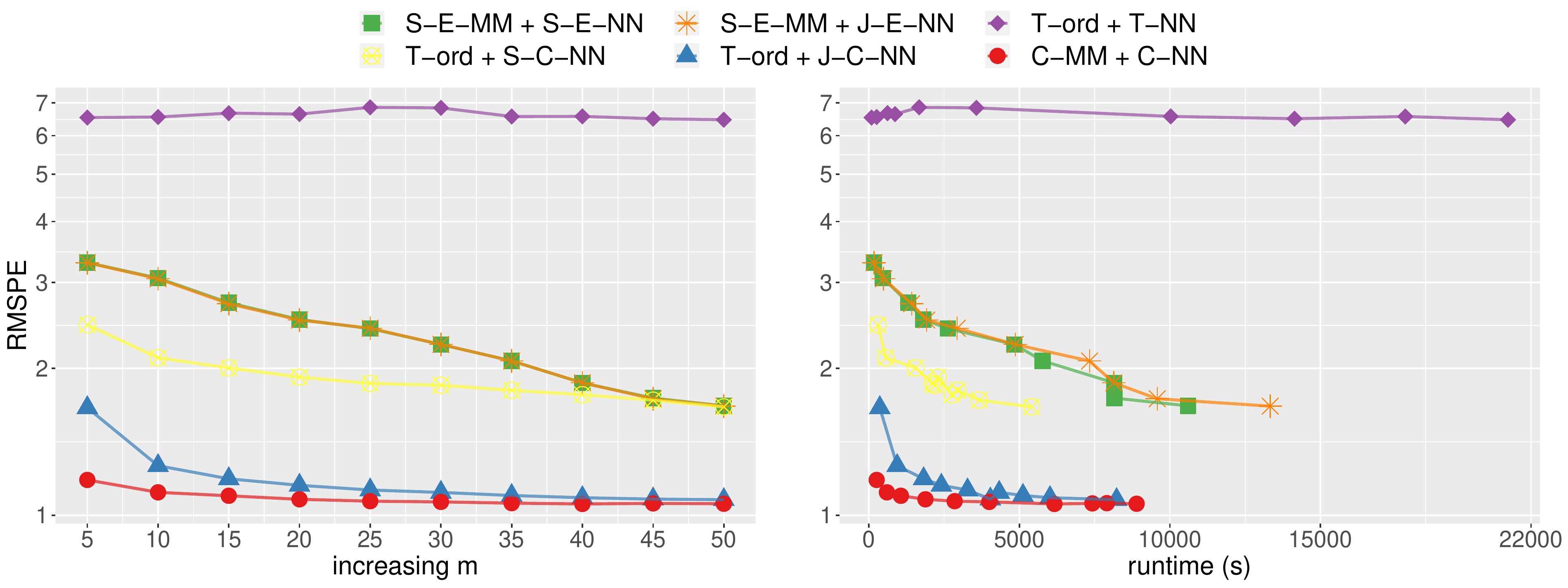}
    \caption{For the NARCCAP data, root mean squared prediction error (on a log scale) at held-out test points as a function of $m$ \textbf{(left)} and as a function of training time of the Fisher-scoring algorithm \textbf{(right)}}
    \label{fig:realdata_perf}
\end{figure}


\section{Conclusions \label{sec:conc}}

We have introduced CVecchia, a covariance approximation that results in a sparse inverse Cholesky factor, whose ordering and sparsity pattern are based on the correlation structure. For reducible GPs, CVecchia implicitly applies a Euclidean-based Vecchia approximation in a transformed input domain in which the GP is isotropic. CVecchia is applicable to any covariance matrix, and it even allows for likelihood-based inference on unknown covariance parameters. We numerically demonstrated the applicability of CVecchia to a variety of covariance structures, some of which had no applicable existing Vecchia approximations. In settings with suitable existing approximations, CVecchia strongly outperformed them.

Special cases of our general CVecchia idea have already been successfully employed in several applications (which were started later but completed earlier than the present paper): \citet{Katzfuss2020} used the idea to approximate anisotropic GPs for computer-model emulation in high input dimension; \citet{Messier2020} approximated spatio-temporal land-use regression for ground-level nitrogen dioxide; and in the context of nonparametric inference \citep{Kidd2020}, ideas related to CVecchia were used with sample correlations instead of parametric correlations.

While we have largely focused on geospatial settings here, CVecchia can also be applied to large-scale machine learning settings where input domains are not Euclidean and there is no explicit expression of covariance function. Examples include: multi-task learning \citep{williams2007multi,groot2011learning}, where multiple observations are collected from multiple related tasks and joint modeling utilizes intra- and inter-task relatedness; natural language processing \citep[NLP; see][for recent review]{min2021recent}, where words are represented in a latent vector space using word embedding methods \citep[e.g.,][]{beck2014joint,beck2017modelling,deriu2017leveraging} and CVecchia can be applied efficiently based on only geometric relations between word vectors; and modeling of multiple interacting latent chemical species in biochemical interaction networks \citep{gao2008gaussian}, where the covariance function of gene expression is an integral equation of inputs.


\footnotesize
\appendix
\section*{Acknowledgments}

Both authors were partially supported by National Science Foundation (NSF) Grant DMS--1953005. Katzfuss' research was also partially supported by NSF Grants DMS--1654083 and CCF--1934904. We would like to thank Florian Sch\"afer and Joseph Guinness for helpful comments and discussions.


\section{Proofs \label{app:proofs}}

Before proving the main result (Proposition~\ref{prop:equivalence} in Section~\ref{sec:mathprop}), we shall need the following simple lemma. 
\begin{lemma}\label{lemm:deformation}
Suppose that $y(\cdot) \sim \GP(0,K)$ on $\mathbb{R}^d$ is $q$-reducible with respect to $\psi$. Define $y_0(\cdot) = y \big( \psi^{-1}(\cdot) \big) \sim \GP(0,K_0)$. Then, 
\[
K(\bx , \bx') = K_0 (\| \psi(\bx) - \psi(\bx') \|), \qquad \bx, \bx' \in \mathbb{R}^d.
\]
\end{lemma}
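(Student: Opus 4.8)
The plan is to treat this as a change-of-variables identity, exploiting the very definition of $q$-reducibility together with the invertibility of $\psi$ on its image. First I would unpack Definition~\ref{def:reducibility}: since $y(\cdot)$ is $q$-reducible with respect to $\psi$, the process $y_0(\cdot) = y\big(\psi^{-1}(\cdot)\big)$ is a GP with a strictly decreasing isotropic covariance function on the image $\psi(\mathbb{R}^d) \subseteq \mathbb{R}^q$. Writing $K_0$ for this covariance, isotropy means there is a univariate function — which, following the statement's notation, I also denote $K_0$ — such that $\cov\big(y_0(\bu), y_0(\bu')\big) = K_0(\|\bu - \bu'\|)$ for all $\bu, \bu'$ in the domain of $y_0$. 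This univariate $K_0(\|\cdot\|)$ is precisely the object appearing on the right-hand side of the claim.

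The single substantive step is then the substitution $\bx = \psi^{-1}\big(\psi(\bx)\big)$, which is legitimate because $\psi$ must be injective for $\psi^{-1}$ (and hence $y_0$) to be well-defined in the first place, so that $\psi^{-1} \circ \psi$ is the identity on $\mathbb{R}^d$. Setting $\bu = \psi(\bx)$ and $\bu' = \psi(\bx')$, I would compute
\[
K(\bx, \bx') = \cov\big(y(\bx), y(\bx')\big) = \cov\big(y_0(\psi(\bx)), y_0(\psi(\bx'))\big) = K_0\big(\|\psi(\bx) - \psi(\bx')\|\big),
\]
where the middle equality uses $y(\bx) = y_0(\psi(\bx))$ and the final equality invokes the isotropy of $K_0$. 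This yields the claim directly.

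There is no real obstacle here — this is the \emph{simple lemma} flagged in the text — but the one point deserving care is the domain bookkeeping when $q > d$: because $\psi$ need not be surjective onto $\mathbb{R}^q$, both $\psi^{-1}$ and the isotropic covariance $K_0$ are only defined on $\psi(\mathbb{R}^d)$, and I would make explicit that every distance $\|\psi(\bx) - \psi(\bx')\|$ entering the identity is realized by a pair of points lying in that image. I would also note that the injectivity of $\psi$ is implicitly demanded by the definition (otherwise the composition $y\big(\psi^{-1}(\cdot)\big)$ would be ambiguous), which is exactly what licenses the substitution $\psi^{-1}\circ\psi = \mathrm{id}$ that drives the argument.
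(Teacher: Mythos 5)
Your argument is the same change-of-variables computation as the paper's proof: substitute $\bx = \psi^{-1}(\psi(\bx))$, identify $y(\bx)$ with $y_0(\psi(\bx))$, and invoke the isotropy of $K_0$ to conclude. The extra remarks on injectivity of $\psi$ and domain bookkeeping for $q>d$ are sensible precisions the paper leaves implicit, but the substance is identical.
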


\begin{proof}[Proof of Lemma~\ref{lemm:deformation}]
Note that the isotropic covariance function $K_0$ is only a function of Euclidean distance between inputs. For any inputs $\bx , \bx' \in \mathbb{R}^d$,
\[
K(\bx , \bx') = \cov \left( y(\bx) , y(\bx') \right) = \cov \left( y \big( \psi^{-1}(\psi(\bx)) \big) , y \big( \psi^{-1}(\psi(\bx')) \big) \right) = K_0 \left( \| \psi(\bx) - \psi(\bx') \| \right),
\]
where $\psi(\bx), \psi(\bx') \in \mathbb{R}^q$.
\end{proof}

\begin{proof}[Proof of Proposition~\ref{prop:equivalence}]
From Lemma~\ref{lemm:deformation},
\begin{align*}
\tau_C (i , j) &= \left( 1 - \frac{K(\bx_i , \bx_j)}{\sqrt{K(\bx_i , \bx_i)}\sqrt{K(\bx_j , \bx_j)}} \right)^{1/2}\\
&= \left( 1 - \frac{K_0 \left( \| \psi(\bx_i) - \psi(\bx_j) \| \right)}{\sqrt{K_0 \left( \| \psi(\bx_i) - \psi(\bx_i) \| \right)}\sqrt{K_0 \left( \| \psi(\bx_j) - \psi(\bx_j) \| \right)}} \right)^{1/2}\\
&= \left( 1 - \frac{K_0 \left( \| \psi(\bx_i) - \psi(\bx_j) \| \right)}{K_0 (0)} \right)^{1/2},
\end{align*}
which is strictly increasing in $\tau^\psi_E (i , j) = \| \psi(\bx_i) - \psi(\bx_j) \|$, the Euclidean distances between the corresponding transformations. Then, since each step of the MM ordering only depends on the ranking of distances between inputs, for each $k$,
\begin{equation*}
\argmax_{i \, \in \, \mathcal{I} \setminus \mathcal{I}_{1:k-1}} \,\, \min_{j \, \in \, \mathcal{I}_{1:k-1}} \tau_C(i,j) \;  = \; \argmax_{i \, \in \, \mathcal{I} \setminus \mathcal{I}_{1:k-1}} \,\, \min_{j \, \in \, \mathcal{I}_{1:k-1}} \tau^\psi_E(i,j),
\end{equation*}
and so C-MM of the inputs is identical to E-MM of their transformations. For the same reason, C-NN of the inputs is identical to E-NN of their transformations. Therefore, if the first index is chosen to be same for both C-MM and E-MM orderings, CVecchia of $y(\cdot)$ at the inputs $\bx_1 , \ldots , \bx_n$ is equivalent to EVecchia of $y\big( \psi^{-1}(\cdot) \big)$ at the transformed inputs $\psi(\bx_1) , \ldots , \psi(\bx_n)$.
\end{proof}


\bibliographystyle{apalike}
\bibliography{mendeley,additionalrefs}


\end{document}